\documentclass{sigplanconf}

\usepackage{latexsym}
\usepackage{amssymb}
\usepackage{amsmath}
\usepackage{amsfonts}
\usepackage{array}
\usepackage{graphicx}
\usepackage[utf8]{inputenc}
\usepackage{tikz}

\renewcommand{\Re}{\mathrm{Re}}
\renewcommand{\Im}{\mathrm{Im}}

\newenvironment{proof}{\paragraph{\emph{Proof.}}}{\hfill$\square$}
\newtheorem{theorem}{Theorem}

\newtheorem{proposition}{Proposition}
\newtheorem{definition}{Definition}
\newtheorem{example}{Example}

\begin{document}

\setlength{\pdfpageheight}{\paperheight}
\setlength{\pdfpagewidth}{\paperwidth}

\CopyrightYear{2016}
%\setcopyright{rightsretained}
\publicationrights{author-pays}
\conferenceinfo{LICS '16}{July 05-08, 2016, New York, NY, USA}
\copyrightdata{978-1-4503-4391-6/16/07} %/isbn
\copyrightdoi{2933575.2934538} %/doi

\title{Solvability of Matrix-Exponential Equations}

\authorinfo{Jo\"{e}l Ouaknine \thanks{Supported by the EPSRC.}\thanks{Supported by the ERC grant AVS-ISS (648701).}}
           {University of Oxford}
           {joel@cs.ox.ac.uk}
\authorinfo{Amaury Pouly $^\dagger$}
           {University of Oxford}
           {amaury.pouly@cs.ox.ac.uk}
\authorinfo{Jo\~{a}o Sousa-Pinto $^\star$\thanks{Supported by the ERC grant ALGAME (321171).}}
           {University of Oxford}
           {jspinto@cs.ox.ac.uk}
\authorinfo{James Worrell $^\star$}
           {University of Oxford}
           {jbw@cs.ox.ac.uk}

\maketitle

\begin{abstract}
We consider a continuous analogue of \cite{MultiplicativeMatrixEquations}'s and \cite{ABC}'s problem of solving multiplicative matrix equations. Given $k+1$ square matrices $A_{1}, \ldots, A_{k}, C$, all of the same dimension, whose entries are real algebraic, we examine the problem of deciding whether there exist non-negative reals $t_{1}, \ldots, t_{k}$ such that
\begin{align*}
\prod \limits_{i=1}^{k} \exp(A_{i} t_{i}) = C .
\end{align*}
We show that this problem is undecidable in general, but decidable under the assumption that the matrices $A_{1}, \ldots, A_{k}$ commute. Our results have applications to reachability problems for linear hybrid automata.

Our decidability proof relies on a number of theorems from algebraic and transcendental number theory, most notably those of Baker, Kronecker, Lindemann, and Masser, as well as some useful geometric and linear-algebraic results, including the Minkowski-Weyl theorem and a new (to the best of our knowledge) result about the uniqueness of strictly upper triangular matrix logarithms of upper unitriangular matrices. On the other hand, our undecidability result is shown by reduction from Hilbert's Tenth Problem.
\end{abstract}

\keywords{exponential matrices, matrix reachability, matrix logarithms, commuting matrices, hybrid automata}

%\terms{...}

\category{Theory of Computation}{Models of Computation}{Timed and hybrid models}
\category{Mathematics of computing}{Mathematical Analysis}{Ordinary differential equations}
\category{Mathematics of computing}{Mathematical Analysis}{Number-theoretic computations}
\section{Introduction}

Reachability problems are a fundamental staple of theoretical computer
science and verification, one of the best-known examples being the
Halting Problem for Turing machines. In this paper, our motivation
originates from systems that evolve continuously subject to linear
differential equations; such objects arise in the analysis of a range
of models, including linear hybrid automata, continuous-time Markov
chains, linear dynamical systems and cyber-physical systems as they
are used in the physical sciences and engineering---see,
e.g.,~\cite{Alu15}.

More precisely, consider a system consisting of a finite number of
discrete locations (or control states), having the property that the
continuous variables of interest evolve in each location according to
some linear differential equation of the form $\dot{\boldsymbol{x}} = A
\boldsymbol{x}$; here $\boldsymbol{x}$ is a vector of continuous
variables, and $A$ is a square `rate' matrix of appropriate
dimension. As is well-known, in each location the closed form solution
$\boldsymbol{x}(t)$ to the differential equation admits a
matrix-exponential representation of the form $\boldsymbol{x}(t) =
\exp(At)\boldsymbol{x}(0)$. Thus if a system evolves through a series
of $k$ locations, each with rate matrix $A_i$, and spending time $t_i
\geq 0$ in each location, the overall effect on the initial continuous
configuration is given by the matrix
\begin{align*}
\prod \limits_{i=1}^{k} \exp(A_{i} t_{i}) \, ,
\end{align*}
viewed as a linear transformation on $\boldsymbol{x}(0)$.\footnote{In
  this motivating example, we are assuming that there are no discrete
  resets of the continuous variables when transitioning between
  locations.}

A particularly interesting situation arises when the matrices $A_i$
commute; in such cases, one can show that the order in which the
locations are visited (or indeed whether they are visited only once or
several times) is immaterial, the only relevant data being the total
time spent in each location. Natural questions then arise as to what
kinds of linear transformations can thus be achieved by such systems.

\subsection{Related Work}

Consider the following problems, which can be seen as discrete analogues of the question we deal with in this paper.

\begin{definition}[Matrix Semigroup Membership Problem]
Given $k+1$ square matrices $A_{1}, \ldots, A_{k}, C$, all of the same dimension, whose entries are algebraic, does the matrix $C$ belong to the multiplicative semigroup generated by $A_{1}, \ldots, A_{k}$?
\end{definition}

\begin{definition}[Solvability of Multiplicative Matrix Equations]
Given $k+1$ square matrices $A_{1}, \ldots, A_{k}, C$, all of the same dimension, whose entries are algebraic, does the equation
\begin{align*}
\prod\limits_{i=1}^{k} A_{i}^{n_{i}} = C
\end{align*}
admit any solution $n_{1}, \ldots, n_{k} \in \mathbb{N}$?
\end{definition}

In general, both problems have been shown to be undecidable, in
\cite{Paterson} and \cite{MEHTP}, by reductions from Post's
Correspondence Problem and Hilbert's Tenth Problem, respectively.

When the matrices $A_{1}, \ldots, A_{k}$ commute, these problems are
identical, and known to be decidable, as shown in
\cite{MultiplicativeMatrixEquations}, generalising the solution of the
matrix powering problem, shown to be decidable in \cite{KL}, and the
case with two commuting matrices, shown to be decidable in \cite{ABC}.

See \cite{HalavaSurvey} for a relevant survey, and \cite{CK05} for
some interesting related problems.

The following continuous analogue of \cite{KL}'s Orbit Problem was
shown to be decidable in \cite{Hainry}:

\begin{definition}[Continuous Orbit Problem]
Given an $n \times n$ matrix $A$ with algebraic entries and two
$n$-dimensional vectors $\boldsymbol{x}, \boldsymbol{y}$ with
algebraic coordinates, does there exist a non-negative real $t$ such
that $\exp(At) \boldsymbol{x} = \boldsymbol{y}$?
\end{definition}

The paper \cite{ContinuousOrbitIPL} simplifies the argument of
\cite{Hainry} and shows polynomial-time decidability. Moreover, a
continuous version of the Skolem-Pisot problem was dealt with in
\cite{ContinuousSkolem}, where a decidability result is presented for
some instances of the problem.

As mentioned earlier, an important motivation for our work comes from
the analysis of hybrid automata. In addition to~\cite{Alu15},
excellent background references on the topic are
\cite{HenzingerSTOC,HenzingerLICS}.

\subsection{Decision Problems}

We start by defining three decision problems that will be the main
object of study in this paper: the \emph{Matrix-Exponential Problem},
the \emph{Linear-Exponential Problem}, and the
\emph{Algebraic-Logarithmic Integer Programming} problem.

\begin{definition}
  An instance of the Matrix-Exponential Problem (MEP) consists of
  square matrices $A_{1}, \ldots, A_{k}$ and $C$, all of the same
  dimension, whose entries are real algebraic numbers.  The problem
asks to determine whether there exist real numbers 
$t_1,\ldots,t_k \geq 0$ such that 
\begin{align}
\label{MEP}
\prod \limits_{i=1}^{k} \exp(A_{i} t_{i}) = C \, .
\end{align}
\label{def:MEP}
\end{definition}

We will also consider a generalised version of this problem, called
the \emph{Generalised MEP}, in which the matrices $A_1,\ldots,A_k$ and
$C$ are allowed to have complex algebraic entries and in which the
input to the problem also mentions a polyhedron
$\mathcal{P}\subseteq\mathbb{R}^{2k}$ that is specified by linear
inequalities with real algebraic coefficients.  In the generalised problem
we seek $t_1,\ldots,t_k \in \mathbb{C}$ that satisfy (\ref{MEP}) and
such that the vector
$(\Re(t_1),\ldots,\Re(t_k),
\Im(t_1),\ldots,\Im(t_k))$ lies in $\mathcal{P}$.

In the case of commuting matrices, the Generalised Matrix-Exponential
Problem can be analysed block-wise, which leads us to the following
problem:

\begin{definition}
  An instance of the Linear-Exponential Problem (LEP) consists of a system
  of equations
\begin{align}
\label{single_eqn_form}
  \exp\left(\sum_{i \in I} \lambda_i^{(j)} t_i \right) = c_j \exp (d_j) 
\quad (j \in J),
\end{align}
where $I$ and $J$ are finite index sets, the $\lambda_i^{(j)}$, $c_j$
and $d_j$ are complex algebraic constants, and the $t_i$ are complex
variables, together with a polyhedron
$\mathcal{P} \subseteq \mathbb{R}^{2k}$ that is specified by a system
of linear inequalities with algebraic coefficients.  The problem asks
to determine whether there exist $t_1,\ldots,t_k\in \mathbb{C}$ that
satisfy the system (\ref{single_eqn_form}) and such that
$(\Re(t_1),\ldots,\Re(t_k),\Im(t_1),\ldots,\Im(t_k))$
lies in $\mathcal{P}$.
\label{def:LEP}
\end{definition}

To establish decidability of the Linear-Exponential Problem, we reduce
it to the following 
\emph{Algebraic-Logarithmic Integer Programming}
problem.  Here a \emph{linear form in logarithms of algebraic numbers}
is a number of the form
$\beta_{0} + \beta_{1} \log(\alpha_{1}) + \cdots + \beta_{m}
\log(\alpha_{m})$,
where
$\beta_{0}, \alpha_{1}, \beta_{1}, \ldots, \alpha_{m}, \beta_{m}$ are
algebraic numbers and $\log$ denotes a fixed branch of the complex
logarithm function.

% \exp(\boldsymbol{\lambda} \cdot \boldsymbol{t}) = c_{\boldsymbol{\lambda}} \exp(d_{\boldsymbol{\lambda}}), \quad \boldsymbol{\lambda} \in \overline{\mathbb{Q}}^{n}, c_{\boldsymbol{\lambda}}, d_{\boldsymbol{\lambda}} \in \overline{\mathbb{Q}}
% \end{align}
% and a convex polyhedron $\mathcal{P} \subseteq \mathbb{R}^{2k}$ with an algebraic description, decide whether (\ref{single_eqn_form}) admits a solution $\boldsymbol{t}$ such that $(\Re(\boldsymbol{t}), \Im(\boldsymbol{t})) \in \mathcal{P}$.
% \end{definition}

% In both cases, when a constraint of the form $(\Re(\boldsymbol{t}), \Im(\boldsymbol{t})) \in \mathcal{P}$ is imposed, where $\mathcal{P} \subseteq \mathbb{R}^{2k}$ is a convex polyhedron with an algebraic description, we refer to the problem as the \emph{constrained MEP/LEP}; the predicates defined above may often be combined, e.g. \emph{constrained commuting MEP}.

\begin{definition}
An instance of the Algebraic-Logarithmic Integer Programming Problem (ALIP) consists of a finite system of equations of the form
\begin{align*}
A \boldsymbol{x} \leq \frac{1}{\pi} \boldsymbol{b}
\end{align*}
where $A$ is an $m\times n$ matrix with real algebraic entries and
where the coordinates of $\boldsymbol{b}$ are real linear forms in
logarithms of algebraic numbers. The problem asks to determine whether
such a system admits a solution $\boldsymbol{x} \in \mathbb{Z}^{n}$.
\end{definition}

\subsection{Paper Outline}

After introducing the main mathematical techniques that are used in
the paper, we present a reduction from the Generalised Matrix
Exponential Problem with commuting matrices to the Linear-Exponential
Problem, as well as a reduction from the Linear-Exponential Problem to
the Algebraic-Logarithmic Integer Programming Problem, before finally showing that the Algebraic-Logarithmic Integer Programming Problem is decidable. By way of hardness, we will prove that the Matrix-Exponential Problem is
undecidable (in the non-commutative case), by reduction from Hilbert's
Tenth Problem.

\section{Mathematical Background}
\label{background}

\subsection{Number Theory and Diophantine Approximation}

A number $\alpha \in \mathbb{C}$ is said to be \emph{algebraic} if
there exists a non-zero polynomial $p \in \mathbb{Q}[x]$ for which
$p(\alpha) = 0$. A complex number that is not algebraic is said to be
\emph{transcendental}. The monic polynomial $p \in \mathbb{Q}[x]$ of
smallest degree for which $p(\alpha) = 0$ is said to be the minimal
polynomial of $\alpha$. The set of algebraic numbers, denoted by
$\overline{\mathbb{Q}}$, forms a field. Note that the complex
conjugate of an algebraic number is also algebraic, with the same
minimal polynomial. It is possible to represent and manipulate
algebraic numbers effectively, by storing their minimal polynomial and
a sufficiently precise numerical approximation. An excellent course
(and reference) in computational algebraic number theory can be found
in \cite{Cohen}. Efficient algorithms for approximating algebraic
numbers were presented in \cite{Pan}.

Given a vector $\boldsymbol{\lambda} \in \overline{\mathbb{Q}}^{m}$, its \emph{group of multiplicative relations} is defined as
\begin{align*}
L(\boldsymbol{\lambda}) = \lbrace \boldsymbol{v} \in \mathbb{Z}^{m} : \boldsymbol{\lambda}^{\boldsymbol{v}} = 1 \rbrace .
\end{align*}

Moreover, letting $\log$ represent a fixed branch of the complex logarithm function, note that $\log(\alpha_{1}), \ldots, \log(\alpha_{m})$ are linearly independent over $\mathbb{Q}$ if and only if
\begin{align*}
L(\alpha_{1}, \ldots, \alpha_{m}) = \lbrace \boldsymbol{0} \rbrace .
\end{align*}

Being a subgroup of the free finitely generated abelian group $\mathbb{Z}^{m}$, the group $L(\boldsymbol{\lambda})$ is also free and admits a finite basis.

The following theorem, due to David Masser, allows us to effectively determine $L(\boldsymbol{\lambda})$, and in particular decide whether it is equal to $\lbrace \boldsymbol{0} \rbrace$. This result can be found in \cite{Masser}.

\begin{theorem}[Masser]
The free abelian group $L(\boldsymbol{\lambda})$ has a basis $\boldsymbol{v}_{1}, \ldots, \boldsymbol{v}_{l} \in \mathbb{Z}^{m}$ for which
\begin{align*}
\max\limits_{1 \leq i \leq l, 1 \leq j \leq m} \lvert v_{i,j} \rvert \leq (D \log H)^{O(m^{2})}
\end{align*}
where $H$ and $D$ bound respectively the heights and degrees of all the $\lambda_{i}$.
\end{theorem}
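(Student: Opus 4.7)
The proof combines a quantitative version of Baker's theorem on linear forms in logarithms with the geometry of numbers. Reformulate: $\boldsymbol{v} \in L(\boldsymbol{\lambda})$ iff $\Lambda(\boldsymbol{v}) := \sum_{i=1}^{m} v_i \log \lambda_i$ lies in $2\pi i \mathbb{Z}$, so $L(\boldsymbol{\lambda})$ is the kernel of the homomorphism $\boldsymbol{v} \mapsto \Lambda(\boldsymbol{v}) \bmod 2\pi i \mathbb{Z}$. Adjoining $\lambda_{m+1}=-1$ (so $\log\lambda_{m+1}=i\pi$) recasts this as a pure linear-independence question about logarithms of algebraic numbers.

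The technical backbone is the Baker-type lower bound: for integers $v_0, v_1, \ldots, v_m$ not all zero with $V = \max_i |v_i|$, if $\Lambda := 2\pi i v_0 + \sum_{i=1}^{m} v_i \log \lambda_i \neq 0$, then $|\Lambda| \geq V^{-f(m, D, \log H)}$ for an explicit polynomial $f$. I would first extract one short element of $L(\boldsymbol{\lambda})$ by a pigeonhole argument: among the $(2N+1)^m$ integer points of $[-N, N]^m$, their $\Lambda$-images modulo $2\pi i \mathbb{Z}$ lie in a bounded cylinder of $\mathbb{C}$, so two images must be within some $\delta(N)$ of each other. Their difference $\boldsymbol{v}^*$ satisfies $\|\boldsymbol{v}^*\|_\infty \leq 2N$ and $|\Lambda(\boldsymbol{v}^*) - 2\pi i k| \leq \delta(N)$ for some $k \in \mathbb{Z}$. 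Choosing $N = (D \log H)^{O(m)}$ forces this quantity strictly below Baker's bound, so it must vanish, yielding $\boldsymbol{v}^* \in L(\boldsymbol{\lambda}) \setminus \{0\}$ of the required size.

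To upgrade from one short vector to a full basis, I would iterate in a dimension-reducing fashion: project modulo $\boldsymbol{v}^*$ (via Hermite normal form, which identifies $\mathbb{Z}^m/\langle\boldsymbol{v}^*\rangle$ with $\mathbb{Z}^{m-1}$), re-express the defining relation in a smaller tuple of logarithms, and repeat. After at most $m$ iterations one has $l$ short independent relations; Minkowski's second theorem then converts these into an honest basis with entries bounded by roughly $l! \cdot \mu_l$, where $\mu_l$ denotes the largest successive minimum. The main obstacle is controlling heights in the inductive step: eliminating a coordinate using a found relation introduces rational exponents whose denominators can inflate the heights of the induced algebraic data. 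Ensuring this inflation stays polynomial in $(D, \log H)$ at each stage, so that Baker's theorem continues to apply with essentially the same parameters, is the delicate bookkeeping that ultimately pins the exponent at $O(m^2)$ rather than something worse.
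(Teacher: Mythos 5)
The paper does not actually prove this theorem --- it is quoted as a black box from \cite{Masser}, and Masser's own argument rests on height lower bounds of Dobrowolski--Lehmer type together with geometry of numbers applied to the image lattice $\mathbb{Z}^m/L(\boldsymbol{\lambda})$ inside $\overline{\mathbb{Q}}^\times$, not on linear forms in logarithms. Your proposal to go via Baker's theorem is therefore a genuinely different route (in the spirit of Loxton--van der Poorten), but the pigeonhole step, as written, does not close, and this is a fatal gap rather than a bookkeeping issue.

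The problem is a quantitative mismatch. The $(2N+1)^m$ images $\Lambda(\boldsymbol{v})$, $\boldsymbol{v}\in[-N,N]^m$, live modulo $2\pi i\mathbb{Z}$ in a strip $[-R,R]\times[0,2\pi)$ with $R=\Theta(N)$ --- the real part $\sum_i v_i\log|\lambda_i|$ is generally unbounded, contrary to your ``bounded cylinder'' --- so the best a planar pigeonhole can give is two images within $\delta(N)\asymp N^{(1-m)/2}$, a \emph{fixed} power of $N$. On the other side, Baker's lower bound for a nonvanishing form in $m+1$ logarithms with integer coefficients bounded by $2N$ has the shape $(2N)^{-f}$, where $f$ is a polynomial in $D$ and $\log H$ that does not improve as $N$ grows. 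Since $(m-1)/2$ is a constant while $f$ is typically far larger, no choice $N=(D\log H)^{O(m)}$ forces $\delta(N)<(2N)^{-f}$, so you cannot conclude the small linear form vanishes. Indeed, if this step went through it would \emph{manufacture} a nontrivial relation for every tuple $\boldsymbol{\lambda}$, which is absurd: when the logarithms are $\mathbb{Q}$-linearly independent, $L(\boldsymbol{\lambda})=\{0\}$ and no pigeonhole can produce one. A workable Baker-theoretic route must first \emph{assume} a dependence exists, use Baker's theorem to pass from an $\overline{\mathbb{Q}}$-linear to a $\mathbb{Q}$-linear relation, and only then bound coefficients by solving the resulting linear system; alternatively, follow Masser and bound the covolume of the image lattice from below via a Lehmer-type height estimate. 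Your subsequent dimension-reduction and height-control discussion is reasonable in outline, but it is built on a first step that fails.
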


%We will need the following results of Baker~\cite{Baker75}.
%The first one, together with Masser's theorem, allows us
%to eliminate all algebraic relations in the description of linear
%forms in logarithms of algebraic numbers.

Together with the following result, due to Alan Baker, Masser's theorem allows us to eliminate all algebraic relations in the description of linear forms in logarithms of algebraic numbers. In particular, it also yields a method for comparing linear forms in logarithms of algebraic numbers: test whether their difference is zero and, if not, approximate it numerically to sufficient precision, so as to infer its sign. Note that the set of linear forms in logarithms of algebraic numbers is closed under addition and under multiplication by algebraic numbers, as well as under complex conjugation. See \cite{Baker75} and \cite{BakerPaper}.

\begin{theorem}[Baker]
Let $\alpha_{1}, \ldots, \alpha_{m} \in \overline{\mathbb{Q}} \setminus \lbrace 0 \rbrace$. If
\begin{align*}
\log(\alpha_{1}), \ldots, \log(\alpha_{m})
\end{align*}
are linearly independent over $\mathbb{Q}$, then
\begin{align*}
1, \log(\alpha_{1}), \ldots, \log(\alpha_{m})
\end{align*}
are linearly independent over $\overline{\mathbb{Q}}$.
\end{theorem}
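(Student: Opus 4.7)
The plan is to follow the classical auxiliary function method pioneered by Gelfond and Schneider and extended by Baker to the several-variable setting. Arguing by contradiction, assume there exist algebraic numbers $\beta_0, \beta_1, \ldots, \beta_m$, not all zero, such that $\beta_0 + \beta_1 \log(\alpha_1) + \cdots + \beta_m \log(\alpha_m) = 0$. After relabelling, one may take $\beta_m \neq 0$, so that $\log(\alpha_m)$ becomes a $\overline{\mathbb{Q}}$-linear combination of $1, \log(\alpha_1), \ldots, \log(\alpha_{m-1})$. The strategy is to build an entire function in $m-1$ complex variables that, on the one hand, is forced to be very small on a large polydisc by virtue of this relation, and on the other hand, would take a nonzero algebraic value at some lattice point, bounded from below by a Liouville-type inequality. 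Balancing the two estimates yields a contradiction.

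First I would apply Siegel's lemma to produce a nonzero polynomial $P$ with algebraic integer coefficients in a suitable number field, of degree at most $L$ in each of its variables and with coefficient height at most $H$, such that an auxiliary function of the shape
\[ \Phi(z_1, \ldots, z_{m-1}) = P\bigl(z_1, \ldots, z_{m-1}, \alpha_1^{z_1}, \ldots, \alpha_{m-1}^{z_{m-1}}\bigr) \]
vanishes to high order at every point of the lattice box $\{0, 1, \ldots, R\}^{m-1}$. Siegel's lemma guarantees the existence of such a $P$ as soon as the number of coefficients (on the order of $L^{2m-2}$) sufficiently exceeds the number of linear constraints imposed by the vanishing conditions (on the order of $R^{m-1}$ derivatives per lattice point).

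Next comes the extrapolation step, which is the heart of the argument. Using the Schwarz lemma on polydiscs, the high-order vanishing of $\Phi$ on the small box forces $|\Phi|$ to remain extremely small on a much larger polydisc. Crucially, the assumed linear dependence lets one rewrite $\alpha_m^{z}$ in terms of the other exponentials and $e^{\gamma z}$ for some algebraic $\gamma$; this is what allows the values of $\Phi$ at integer lattice points to be expressed as algebraic numbers of controlled height and degree. Combining the analytic upper bound from Schwarz with the Liouville lower bound (a nonzero algebraic number of bounded height and degree cannot be arbitrarily small) forces $\Phi$ itself to vanish on a strictly larger lattice box. Iterating this extrapolation a controlled number of times eventually produces so many zeros of $\Phi$ that, via a Vandermonde-determinant argument on monomials in the $\alpha_i^{z_i}$, the polynomial $P$ must be identically zero, contradicting its construction.

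The principal obstacle, as in every transcendence proof of this flavour, is the delicate balancing of the parameters $L$, $H$, $R$, and the number of extrapolation steps: the analytic estimates coming from Schwarz and Cauchy must beat the arithmetic lower bound from Liouville, and each iteration slightly erodes the margin. Baker's key insight was that the multidimensional lattice-point geometry, combined with sharp multivariate interpolation inequalities, provides just enough slack to close the induction for arbitrary $m$; carrying out the quantitative bookkeeping is genuinely intricate and occupies a substantial portion of \cite{Baker75}, to which I would refer for the complete technical details.
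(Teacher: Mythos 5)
The paper does not prove this theorem; it quotes Baker's result as a known deep fact, pointing to \cite{Baker75} and \cite{BakerPaper}, and then uses it as a black box (to decide sign of, and eliminate relations among, linear forms in logarithms). So there is no in-paper proof to compare against. Your sketch is a faithful high-level description of Baker's own argument: assume a nontrivial $\overline{\mathbb{Q}}$-relation, eliminate one logarithm, construct an auxiliary function vanishing to high order on a lattice via Siegel's lemma, extrapolate using a Schwarz-type lemma on polydiscs together with a Liouville-type lower bound on nonzero algebraic numbers of controlled height and degree, and iterate until a zero estimate (or Vandermonde nonvanishing argument) forces the auxiliary polynomial to vanish identically. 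You have, correctly, deferred the genuinely substantial quantitative bookkeeping to the same reference the paper cites, which is the right call for a result of this depth. Two small imprecisions worth flagging if you ever expand the sketch: the auxiliary function as you have written it, $P(z_1,\ldots,z_{m-1},\alpha_1^{z_1},\ldots,\alpha_{m-1}^{z_{m-1}})$, does not yet show where the hypothesised relation or the coefficients $\beta_i$ enter --- in Baker's construction the $\beta_i$ appear in the exponents after one substitutes $\alpha_m = e^{-\beta_0/\beta_m}\alpha_1^{-\beta_1/\beta_m}\cdots\alpha_{m-1}^{-\beta_{m-1}/\beta_m}$, and it is precisely the interaction between the $z_j^{\lambda_0}$ factor and the irrational exponents that makes the extrapolation step work; and the constraint count in the Siegel step should separate the number of lattice points ($\sim R^{m-1}$) from the order of vanishing at each ($\sim T^{m-1}$ for some separate parameter $T$). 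Neither affects the correctness of the outline.
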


%The next result essentially implies that one can effectively check
%whether a linear form in logarithms of algebraic numbers equals
%zero. Noting that the set of linear forms in logarithms of algebraic
%numbers is closed under addition and multiplication by algebraic
%numbers, it easily follows that one can effectively compare two linear
%forms in logarithms of algebraic numbers. It is also closed under
%complex conjugation. See \cite{Baker75} and \cite{BakerPaper}.

%\begin{theorem}[Baker]
%Let $\alpha_{1}, \ldots, \alpha_{m}$ be non-zero algebraic numbers with degrees at most $d$ and heights at most $A$. Further, let $\beta_{0}, \ldots, \beta_{m}$ be algebraic numbers with degrees at most $d$ and heights at most $B$, where $B \geq 2$. Write
%\begin{align*}
%\Lambda = \beta_{0} + \beta_{1} \log(\alpha_{1}) + \cdots + \beta_{m} \log(\alpha_{m}) .
%\end{align*}
%Then either $\Lambda = 0$ or $\lvert \Lambda \rvert > B^{-C}$, where $C$ is an effectively computable number depending only on $m$, $d$, $A$, and the chosen branch of the complex logarithm.
%\end{theorem}

The theorem below was proved by Ferdinand von Lindemann in 1882, and later generalised by Karl Weierstrass in what is now known as the Lindemann-Weierstrass theorem. As a historical note, this result was behind the first proof of transcendence of $\pi$, which immediately follows from it.

\begin{theorem}[Lindemann]
If $\alpha \in \overline{\mathbb{Q}} \setminus \lbrace 0 \rbrace$, then $e^{\alpha}$ is transcendental.
\end{theorem}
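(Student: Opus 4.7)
My plan is the classical Hermite--Lindemann method. Suppose for contradiction that $\alpha \in \overline{\mathbb{Q}} \setminus \{0\}$ and that $\beta := e^{\alpha}$ is algebraic. Then $\beta$ satisfies a nonzero polynomial over $\mathbb{Q}$, which after clearing denominators yields integers $a_0, \ldots, a_n$ with $a_0 a_n \neq 0$ and
$$a_0 + a_1 e^{\alpha} + \cdots + a_n e^{n\alpha} = 0. \qquad (\star)$$
The analytic tool is Hermite's identity: for any polynomial $f$, setting $F(x) := \sum_{j \geq 0} f^{(j)}(x)$ one has $(e^{-x} F(x))' = -e^{-x} f(x)$, hence
$$F(z) = e^{z} F(0) - e^{z} \int_{0}^{z} e^{-t} f(t)\, dt.$$
Substituting $z = k\alpha$, multiplying by $a_k$, and summing over $k = 0, \ldots, n$, the hypothesis $(\star)$ cancels the $F(0)$ contributions and leaves
$$S_p := \sum_{k=0}^{n} a_k F(k\alpha) = -\sum_{k=0}^{n} a_k e^{k\alpha} \int_{0}^{k\alpha} e^{-t} f(t)\, dt.$$

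For each sufficiently large prime $p$ I would take the auxiliary polynomial
$$f_p(x) := \frac{c^{np}\, x^{p-1} \prod_{k=1}^{n} (x - k\alpha)^{p}}{(p-1)!},$$
where $c \in \mathbb{Z}_{>0}$ is chosen so that $c\alpha$ is an algebraic integer (e.g.\ the leading coefficient of the minimal polynomial of $\alpha$). The arithmetic step is to check that $S_p$ lies in the ring of integers $\mathcal{O}_{\mathbb{Q}(\alpha)}$, and that $f_p^{(j)}(k\alpha) \in p\,\mathcal{O}_{\mathbb{Q}(\alpha)}$ for every pair $(j,k)$ other than $(j,k) = (p-1, 0)$, for which $f_p^{(p-1)}(0) = (-1)^{np} (n!)^{p} (c\alpha)^{np}$. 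Consequently $S_p \equiv a_0 (-1)^{np} (n!)^{p} (c\alpha)^{np} \pmod{p}$ in $\mathcal{O}_{\mathbb{Q}(\alpha)}$, and taking the norm to $\mathbb{Z}$ yields
$$N_p := N_{\mathbb{Q}(\alpha)/\mathbb{Q}}(S_p) \equiv \pm a_0^{d} (n!)^{dp} \mu^{np} \pmod{p},$$
where $d = [\mathbb{Q}(\alpha) : \mathbb{Q}]$ and $\mu := N_{\mathbb{Q}(\alpha)/\mathbb{Q}}(c\alpha) \in \mathbb{Z} \setminus \{0\}$. Once $p$ exceeds the fixed quantities $|a_0|$, $n!$ and $|\mu|$, this expression is nonzero modulo $p$, so $N_p$ is a nonzero rational integer and $|N_p| \geq 1$.

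The analytic step would contradict this lower bound. For the distinguished conjugate $\alpha_1 = \alpha$, the displayed identity together with the estimate $|e^{-t} f_p(t)| \leq C^{p}/(p-1)!$ on any fixed compact neighbourhood of the segments $[0, k\alpha]$ gives $|S_p| \leq C^{p}/(p-1)!$ for a constant $C$ depending only on $\alpha$ and the $a_k$. For each of the other $d-1$ Galois conjugates $\sigma_i$, the hypothesis $(\star)$ does not apply to $\alpha_i := \sigma_i(\alpha)$, so only the coarser bound $|\sigma_i(S_p)| \leq C^{p}$ holds (the $F(0)$ contribution now survives, but grows at most exponentially in $p$). Multiplying these estimates yields $|N_p| \leq C^{dp}/(p-1)! \longrightarrow 0$, contradicting $|N_p| \geq 1$ for $p$ large.

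The main obstacle is precisely this size comparison: the crucial factorial decay arises from a single conjugate, so one must carefully verify that it survives the product of the coarser bounds on the remaining $d-1$ conjugates. It does, because $(p-1)!$ grows super-exponentially in $p$ whereas the conjugate factors grow only like $C^{(d-1)p}$. A secondary technical point is the transfer of the congruence $S_p \equiv a_0 f_p^{(p-1)}(0) \pmod{p}$ through the norm, which is justified by expanding $N_p = \prod_{\sigma} \sigma(S_p)$ and noting that each conjugate difference $\sigma(S_p) - \sigma(a_0 f_p^{(p-1)}(0))$ lies in $p\, \mathcal{O}_{\overline{\mathbb{Q}}}$.
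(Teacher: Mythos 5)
The paper does not prove Lindemann's theorem; it states it as known background and cites it as a special case of the Lindemann--Weierstrass theorem, so there is no proof in the paper to compare against. Assessing your proposal on its own terms: the arithmetic side is sound (the divisibility of $f_p^{(j)}(k\alpha)$ by $p$ off the pair $(p-1,0)$, the congruence $S_p\equiv a_0 f_p^{(p-1)}(0)\pmod{p\,\mathcal{O}_{\mathbb{Q}(\alpha)}}$, and the transfer through the norm to a nonzero rational integer $N_p$). The gap is in the analytic estimate for the conjugate factors. You assert $|\sigma_i(S_p)|\leq C^p$ for $\sigma_i$ other than the identity, on the grounds that the surviving $F^{\sigma_i}_p(0)$ term ``grows at most exponentially in $p$.'' That is false. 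Using $F^{\sigma_i}_p(0)=\sum_j j!\,[x^j]f^{\sigma_i}_p(x)=\int_0^\infty e^{-t}f^{\sigma_i}_p(t)\,dt$ and noting that $f^{\sigma_i}_p$ has degree $D=(n+1)p-1$ with leading coefficient $c^{np}/(p-1)!$, the top-degree term alone contributes
\begin{align*}
\frac{D!\,c^{np}}{(p-1)!}\;=\;c^{np}\prod_{m=p}^{(n+1)p-1}m\;\geq\;(cp)^{np},
\end{align*}
which is super-exponential in $p$, and the Euler-integral form shows there is no compensating cancellation. Thus $|\sigma_i(S_p)|$ is of order roughly $p^{np}$, so
\begin{align*}
|N_p|\;\lesssim\;\frac{C^p}{(p-1)!}\,\bigl(p^{np}C^p\bigr)^{d-1},
\end{align*}
and since $(p-1)!\sim p^{p}e^{-p}$ while $(d-1)n\geq 1$ for every $d\geq 2$, this upper bound diverges rather than tending to $0$. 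Your size comparison therefore fails for every irrational algebraic $\alpha$, which is precisely the content of the theorem beyond the transcendence of $e$.

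The missing idea is to symmetrize \emph{before} the estimate rather than take a norm afterward. From $\sum_k a_k e^{k\alpha}=0$ pass to $\prod_{j=1}^d\bigl(\sum_k a_k e^{k\alpha^{(j)}}\bigr)=0$ over all conjugates $\alpha^{(j)}$ of $\alpha$, and expand into a relation $\sum_m b_m e^{\gamma_m}=0$ with rational integer coefficients $b_m$ and a Galois-stable exponent set $\{\gamma_m\}$. Building the auxiliary polynomial to vanish at the full conjugate set $\{\gamma_m\}$ makes its coefficients rational, and then each $J_\ell=-\sum_m b_m F_\ell(\gamma_m)$ enjoys the $1/(p-1)!$ decay, because the $F_\ell(0)$ terms cancel in \emph{every} factor, not just one. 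The symmetric rational number $\prod_\ell J_\ell$ then supplies the contradiction. Without that step, your argument only establishes the transcendence of $e^{q}$ for nonzero rational $q$.
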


We will also need the following result, due to Leopold Kronecker, on simultaneous Diophantine approximation, which generalises Dirichlet's Approximation Theorem. We denote the \emph{group of additive relations} of $\boldsymbol{v}$ by
\begin{align*}
A(\boldsymbol{v}) = \lbrace \boldsymbol{z} \in \mathbb{Z}^{d} : \boldsymbol{z} \cdot \boldsymbol{v} \in \mathbb{Z} \rbrace .
\end{align*}

Throughout this paper, $\operatorname{dist}$ refers to the $l_{1}$ distance.

\begin{theorem}[Kronecker]
\label{Kronecker}
Let $\boldsymbol{\alpha}_{1}, \ldots, \boldsymbol{\alpha_{k}} \in \mathbb{R}^{d}$ and $\boldsymbol{\beta} \in \mathbb{R}^{d}$. The following are equivalent:
\begin{enumerate}
\item For any $\varepsilon > 0$, there exists $\boldsymbol{n} \in \mathbb{N}^{k}$ such that
\begin{align*}
\operatorname{dist}(\boldsymbol{\beta} + \sum\limits_{i=1}^{k} n_{i} \boldsymbol{\alpha}_{i}, \mathbb{Z}^{d}) \leq \varepsilon .
\end{align*}
\item It holds that
\begin{align*}
\bigcap\limits_{i=1}^{k} A(\boldsymbol{\alpha}_{i}) \subseteq A(\boldsymbol{\beta}) .
\end{align*}
\end{enumerate}
\end{theorem}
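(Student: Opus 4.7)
The implication $(1) \Rightarrow (2)$ is essentially unpacking definitions. Given $\boldsymbol{z} \in \bigcap_{i} A(\boldsymbol{\alpha}_{i})$, each inner product $\boldsymbol{z} \cdot \boldsymbol{\alpha}_{i}$ lies in $\mathbb{Z}$, so taking the inner product of a near-integer vector $\boldsymbol{\beta} + \sum_{i} n_{i} \boldsymbol{\alpha}_{i}$ with $\boldsymbol{z}$ shows that $\boldsymbol{z} \cdot \boldsymbol{\beta}$ lies within $\|\boldsymbol{z}\|_{\infty} \varepsilon$ of an integer; letting $\varepsilon \to 0$ yields $\boldsymbol{z} \cdot \boldsymbol{\beta} \in \mathbb{Z}$, i.e.\ $\boldsymbol{z} \in A(\boldsymbol{\beta})$.

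For the substantive direction $(2) \Rightarrow (1)$, I would pass to the compact torus $\mathbb{T}^{d} = \mathbb{R}^{d}/\mathbb{Z}^{d}$. Let $H \leq \mathbb{T}^{d}$ be the closure of the subgroup generated by the images $\overline{\boldsymbol{\alpha}}_{1}, \ldots, \overline{\boldsymbol{\alpha}}_{k}$; the weakening of (1) that allows $\boldsymbol{n} \in \mathbb{Z}^{k}$ is tautologically equivalent to $\overline{\boldsymbol{\beta}} \in H$. I would then invoke the duality theorem for closed subgroups of the torus: every such $H$ coincides with its double annihilator $(H^{\perp})^{\perp}$, where $H^{\perp} = \{\boldsymbol{z} \in \mathbb{Z}^{d} : \boldsymbol{z} \cdot \boldsymbol{x} \in \mathbb{Z} \text{ for all } \boldsymbol{x} \in H\}$. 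Because characters are continuous, $H^{\perp}$ equals the annihilator of the (unclosed) subgroup generated by the $\overline{\boldsymbol{\alpha}}_{i}$, which is precisely $\bigcap_{i} A(\boldsymbol{\alpha}_{i})$. Hypothesis (2) now reads $\overline{\boldsymbol{\beta}} \in (H^{\perp})^{\perp} = H$, delivering the approximability with integer coefficients.

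It remains to strengthen the conclusion so that the coefficients lie in $\mathbb{N}^{k}$. For each $i$, Dirichlet's pigeonhole argument applied to the sequence $(j \overline{\boldsymbol{\alpha}}_{i})_{j \geq 1}$ in the compact torus produces arbitrarily large positive integers $N_{i}$ with $\operatorname{dist}(N_{i} \boldsymbol{\alpha}_{i}, \mathbb{Z}^{d}) \leq \varepsilon/(2k)$. Given an $(\varepsilon/2)$-approximation by some $\boldsymbol{n}' \in \mathbb{Z}^{k}$, I choose each $N_{i}$ large enough that $\boldsymbol{n}' + (N_{1}, \ldots, N_{k}) \in \mathbb{N}^{k}$; the triangle inequality then yields the desired $\varepsilon$-approximation with non-negative coefficients.

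The main obstacle is the annihilator duality $(H^{\perp})^{\perp} = H$ for closed subgroups of $\mathbb{T}^{d}$. One containment is immediate, but the reverse, which is the geometric heart of Kronecker's theorem, requires genuine work. I would prove it by induction on $d$: if $H^{\perp} = \{\boldsymbol{0}\}$ one shows directly (via Stone--Weierstrass, or Fej\'er-type character separation) that $H = \mathbb{T}^{d}$; otherwise, pick a primitive vector $\boldsymbol{z} \in H^{\perp}$, extend it to a $\mathbb{Z}$-basis of $\mathbb{Z}^{d}$ using the Smith normal form, and apply the induced rational change of coordinates to reduce to a closed subgroup of $\mathbb{T}^{d-1}$ on which the inductive hypothesis applies.
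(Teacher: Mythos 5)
The paper does not prove Kronecker's theorem; it states it as a classical result and refers the reader to Hardy--Wright and Cassels, so there is no in-paper proof to compare against.

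Your argument is the standard Pontryagin-duality proof and it is correct in all its essentials. The easy direction $(1)\Rightarrow(2)$ is handled properly (note the inequality $|\boldsymbol{z}\cdot\boldsymbol{w}|\le\|\boldsymbol{z}\|_\infty\|\boldsymbol{w}\|_1$ is exactly what matches the $l_1$ definition of $\operatorname{dist}$ used in the paper). For the substantive direction, identifying $H^\perp$ with $\bigcap_i A(\boldsymbol{\alpha}_i)$ by continuity of characters is right, and reducing to the biduality $(H^\perp)^\perp=H$ for closed subgroups of $\mathbb{T}^d$ is the natural move; your inductive scheme (base case $H^\perp=\{\boldsymbol{0}\}\Rightarrow H=\mathbb{T}^d$ via character separation, inductive step by quotienting along a primitive $\boldsymbol{z}\in H^\perp$) is sound. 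Two points worth making explicit if this were to be written out: in the inductive step one should check that the annihilator of $H$ computed inside the subtorus $\ker\chi_{\boldsymbol{z}}\cong\mathbb{T}^{d-1}$, whose character lattice is $\mathbb{Z}^d/\mathbb{Z}\boldsymbol{z}$, is the image of $H^\perp\subseteq\mathbb{Z}^d$, so that biduality in $\mathbb{T}^{d-1}$ really does give biduality in $\mathbb{T}^d$; and in the upgrade from $\mathbb{Z}^k$ to $\mathbb{N}^k$, the phrase ``arbitrarily large $N_i$'' needs the small extra observation that when $\overline{\boldsymbol{\alpha}}_i$ has infinite order one can choose the two returning indices $j_1<j_2$ with $j_2-j_1$ as large as desired (pick $j_1$, then $j_2$ from the infinite set of near-returns), and when it has finite order $m$ one simply takes a large multiple of $m$. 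The $\mathbb{N}^k$ versus $\mathbb{Z}^k$ issue is frequently elided in textbook statements of Kronecker, so it is good that you addressed it explicitly; your triangle-inequality bookkeeping with $\varepsilon/2$ and $\varepsilon/(2k)$ is correct for the $l_1$ metric.
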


Many of these results, or slight variations thereof, can be found in \cite{HardyAndWright} and \cite{Cassels}.

\subsection{Lattices}

Consider a non-zero matrix $K\in\overline{\mathbb{Q}}^{r\times d}$ and vector
$\boldsymbol{k} \in \overline{\mathbb{Q}}^r$.  The following proposition shows
how to compute a representation of the affine lattice
$\{ \boldsymbol{x}\in\mathbb{Z}^d : K\boldsymbol{x} = \boldsymbol{k}
\}$.
Further information about lattices can be found in \cite{LatticeBook}
and \cite{Cohen}.

\begin{proposition}
There exist $\boldsymbol{x}_{0} \in \mathbb{Z}^{d}$ and 
$M \in \mathbb{Z}^{d \times s}$, where $s < r$, such that
\begin{align*}
  \{ \boldsymbol{x}\in\mathbb{Z}^d : K\boldsymbol{x} =
  \boldsymbol{k} \} = 
  \boldsymbol{x}_{0} + \{ M \boldsymbol{y} : \boldsymbol{y} \in \mathbb{Z}^s \} \, .
\end{align*}
\end{proposition}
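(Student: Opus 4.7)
The plan is to reduce the system $K\boldsymbol{x} = \boldsymbol{k}$, whose coefficients lie in $\overline{\mathbb{Q}}$, to an equivalent system with integer coefficients, and then to invoke classical lattice theory to extract the required representation. Let $F$ be the number field generated over $\mathbb{Q}$ by the entries of $K$ and $\boldsymbol{k}$, and fix a $\mathbb{Q}$-basis $\beta_1,\ldots,\beta_N$ of $F$. Writing $K_{ij} = \sum_{l=1}^N q_{ij,l}\,\beta_l$ and $k_i = \sum_{l=1}^N p_{i,l}\,\beta_l$ with $q_{ij,l}, p_{i,l} \in \mathbb{Q}$, and using the $\mathbb{Q}$-linear independence of the $\beta_l$, the original equation for $\boldsymbol{x} \in \mathbb{Z}^d$ becomes equivalent to the enlarged rational system $\sum_j q_{ij,l} x_j = p_{i,l}$ for $1 \le i \le r$ and $1 \le l \le N$. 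Clearing denominators row by row yields an integer system $K''\boldsymbol{x} = \boldsymbol{k}''$ with $K'' \in \mathbb{Z}^{rN \times d}$ and $\boldsymbol{k}'' \in \mathbb{Z}^{rN}$ having exactly the same integer solution set.

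Next I would analyse the integer solutions of $K''\boldsymbol{x} = \boldsymbol{k}''$ by computing a Hermite Normal Form: one obtains a unimodular $U \in \mathbb{Z}^{d \times d}$ and an HNF matrix $H \in \mathbb{Z}^{rN \times d}$ with $K'' U = H$. From $H$ and $\boldsymbol{k}''$ one decides consistency, and in the consistent case reads off a particular solution $\boldsymbol{x}_0 \in \mathbb{Z}^d$ together with a basis for the integer kernel of $K''$, namely the columns of $U$ corresponding to the zero columns of $H$. Collecting these columns in a matrix $M \in \mathbb{Z}^{d \times s}$ gives the required equality $\{\boldsymbol{x}\in\mathbb{Z}^d : K\boldsymbol{x} = \boldsymbol{k}\} = \boldsymbol{x}_0 + \{M\boldsymbol{y} : \boldsymbol{y} \in \mathbb{Z}^s\}$. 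Since $K$ is non-zero, the $\mathbb{Q}$-rank of $K''$ is at least $1$, so $s \le d-1$.

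The only delicate ingredient---and, from the effective standpoint, the main obstacle---is the construction of a $\mathbb{Q}$-basis for $F$ together with the rational coordinates of each algebraic entry in that basis. This is classical in computational algebraic number theory: compute a primitive element of $F$, represent elements as polynomials modulo its minimal polynomial, and apply standard lattice-reduction techniques as described in~\cite{Cohen}. Once these coordinates are available, the remaining manipulations---Hermite Normal Form, consistency check, and kernel extraction---are routine integer linear algebra.
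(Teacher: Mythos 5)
Your proof is correct and follows essentially the same route as the paper: expand the entries of $K$ and $\boldsymbol{k}$ in a $\mathbb{Q}$-basis of the number field they generate (both arguments ultimately rest on the primitive element theorem), equate coefficients over $\mathbb{Q}$ to obtain an enlarged rational system, clear denominators, and observe that the integer solution set of the resulting integer system is an affine lattice. You make the last step explicit via Hermite Normal Form, extracting $\boldsymbol{x}_0$ and $M$ directly, whereas the paper merely asserts that the intersection of affine lattices is an affine lattice; this is a welcome clarification rather than a different method. One small mismatch worth flagging: the proposition as printed claims $s<r$, but your rank argument (rank $\geq 1$ since $K\neq 0$) yields $s\leq d-1$, i.e.\ $s<d$. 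The paper's own proof does not justify $s<r$ either, and the intended bound is almost certainly $s<d$, which is exactly what you establish.
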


\begin{proof}
  Let $\theta$ denote a primitive element of the number field
  generated by the entries of $K$ and $\boldsymbol{k}$. Let the degree
  of this extension, which equals the degree of $\theta$, be
  $D$. Then for $\boldsymbol{x} \in \mathbb{Z}^d$ one can write
\begin{align*}
K \boldsymbol{x} = \boldsymbol{k} &\Leftrightarrow \left( \sum \limits_{i=0}^{D-1} N_{i} \theta^{i} \right) \boldsymbol{x} = \sum \limits_{i=0}^{D-1} \boldsymbol{k}_{i} \theta^{i} \\
&\Leftrightarrow N_{i} \boldsymbol{x} = \boldsymbol{k}_{i}, \forall i \in \lbrace 0, \ldots, D-1 \rbrace ,
\end{align*}
for some integer matrices
$N_{0}, \ldots, N_{D-1} \in \mathbb{Z}^{r \times d}$ and integer
vectors
$\boldsymbol{k}_{0}, \ldots, \boldsymbol{k}_{D-1} \in \mathbb{Z}^{r}$.
The solution of each of these equations is clearly an affine lattice, and
therefore so is their intersection.
\end{proof}

\subsection{Matrix exponentials}

Given a matrix $A \in \mathbb{C}^{n \times n}$, its exponential is defined as
\begin{align*}
\exp(A) = \sum \limits_{i=0}^{\infty} \frac{A^{i}}{i!} .
\end{align*}
The series above always converges, and so the exponential of a matrix is always well defined. The standard way of computing $\exp(A)$ is by finding $P \in \mathit{GL}_{n}(\mathbb{C})$ such that $J=P^{-1}AP$ is in Jordan Canonical Form, and by using the fact that $\exp(A) = P \exp(J) P^{-1}$, where $\exp(J)$ is easy to compute. When $A \in \overline{\mathbb{Q}}^{n \times n}$, $P$ can be taken to be in $GL_{n}(\overline{\mathbb{Q}})$; note that

\begin{align*}
\mbox{if } J &= \begin{pmatrix}
\lambda && 1 && 0 && \cdots && 0 \\
0 && \lambda && 1 &&\cdots && 0 \\
\vdots && \vdots && \ddots && \ddots && \vdots \\
0 && 0 && \cdots && \lambda && 1 \\
0 && 0 && \cdots && 0 && \lambda
\end{pmatrix} \mbox{ then } \\
\exp(Jt) &= \exp(\lambda t) \begin{pmatrix}
1 && t && \frac{t^{2}}{2} && \cdots && \frac{t^{k-1}}{(k-1)!} \\
0 && 1 && t && \cdots && \frac{t^{k-2}}{(k-2)!} \\
\vdots && \vdots &&\ddots && \ddots && \vdots \\
0 && 0 && \cdots && 1 && t \\
0 && 0 && \cdots && 0 && 1
\end{pmatrix} .
\end{align*}

Then $\exp(J)$ can be obtained by setting $t=1$, in particular $\exp(J)_{ij} = \frac{\exp(\lambda)}{(j-i)!}$ if $j \geq i$ and $0$ otherwise.

When $A$ and $B$ commute, so must $\exp(A)$ and $\exp(B)$. Moreover, when $A$ and $B$ have algebraic entries, the converse also holds, as shown in \cite{MatrixExps}. Also, when $A$ and $B$ commute, it holds that $\exp(A)\exp(B) = \exp(A+B)$.

\subsection{Matrix logarithms}

The matrix $B$ is said to be a logarithm of the matrix $A$ if $\exp(B) = A$. It is well known that a logarithm of a matrix $A$ exists if and only if $A$ is invertible. However, matrix logarithms need not be unique. In fact, there exist matrices admitting uncountably many logarithms. See, for example, \cite{MatrixLogs1} and \cite{MatrixLogs2}.

A matrix is said to be unitriangular if it is triangular and all its diagonal entries equal $1$. Crucially, the following uniqueness result holds:

\begin{theorem}
\label{logarithm_uniqueness}
Given an upper unitriangular matrix $M \in \mathbb{C}^{n \times n}$, there exists a unique strictly upper triangular matrix $L$ such that $\exp(L)=M$. Moreover, the entries of $L$ lie in the number field $\mathbb{Q}(M_{i,j}: 1 \leq i,j \leq n)$.
\end{theorem}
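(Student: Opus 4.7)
The plan is to write $M = I + N$ with $N$ strictly upper triangular, hence nilpotent ($N^n = 0$), and to treat existence and uniqueness separately.

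For existence, I would define $L$ by the truncated Mercator series
\[
L := \sum_{k=1}^{n-1} \frac{(-1)^{k+1}}{k} N^k,
\]
a finite sum because $N^n = 0$. This $L$ is strictly upper triangular (each $N^k$ for $k \geq 1$ is) and its entries are $\mathbb{Q}$-linear combinations of products of entries of $N = M - I$, hence lie in $\mathbb{Q}(M_{i,j})$. To verify $\exp(L) = I + N$ I would invoke the formal power series identity $\exp(\log(1+x)) = 1+x$ in $\mathbb{Q}[[x]]$: substituting the nilpotent matrix $N$ for $x$ turns every series into a finite sum, and since all manipulations only involve $N$ and its powers, which commute, the identity carries over verbatim to matrices.

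For uniqueness, I would argue by induction on the superdiagonals. The key observation is that if $L$ is any strictly upper triangular matrix, then for $j \geq 2$
\[
(L^j)_{i,i+k} = \sum_{i < m_1 < \cdots < m_{j-1} < i+k} L_{i,m_1} L_{m_1,m_2} \cdots L_{m_{j-1}, i+k}
\]
depends only on entries $L_{a,b}$ with $b - a < k$, because every index gap $m_{r+1} - m_r$ is strictly less than $k$. Hence the equation $\exp(L) = M$ reads, on the $k$-th superdiagonal,
\[
M_{i,i+k} = L_{i,i+k} + P_{i,k}\bigl(L_{a,b} : b - a < k\bigr),
\]
where $P_{i,k}$ is a polynomial with rational coefficients and $P_{i,1} = 0$. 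A straightforward induction on $k$ (with base case $L_{i,i+1} = M_{i,i+1}$) then determines each entry of $L$ uniquely as a $\mathbb{Q}$-rational polynomial in the entries of $M$, yielding both uniqueness and an independent proof of the field statement.

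The main potential obstacle is justifying the formal-series manipulation in the existence step, but because $N$ is nilpotent this is purely formal: every infinite series collapses to a polynomial identity in the single matrix $N$, which holds because the corresponding identity already holds in $\mathbb{Q}[[x]]$.
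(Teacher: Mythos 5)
Your proof is correct, and the uniqueness argument---induction on the superdiagonal index, using the observation that for $j \geq 2$ each summand of $(L^j)_{i,i+k}$ involves only entries of $L$ with index gap strictly less than $k$---is essentially the paper's proof, just spelled out more explicitly in terms of path sums rather than an inner induction on the power $m$. Where you diverge is in giving a separate existence argument via the truncated Mercator series $L = \sum_{k=1}^{n-1}\frac{(-1)^{k+1}}{k}(M-I)^k$ together with the formal power series identity $\exp(\log(1+x)) = 1+x$ in $\mathbb{Q}[[x]]$, validly transported to the nilpotent matrix $M-I$; the paper instead obtains existence as a byproduct of the same superdiagonal recursion that proves uniqueness. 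Both are sound, and the closed-form expression is a pleasant bonus, but as you yourself observe, the uniqueness recursion alone already delivers all three claims (existence, uniqueness, and field membership), so the Mercator-series step is logically redundant rather than a genuinely independent route.
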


\begin{proof}
Firstly, we show that, for any strictly upper triangular matrix $T$ and for any $1<m<n$ and $i<j$, the term $(T^{m})_{i,j}$ is polynomial on the elements of the set $\lbrace T_{r,s} : s-r<j-i \rbrace$. This can be seen by induction on $m$, as each $T^{m}$ is strictly upper triangular, and so
\begin{align*}
(T^{m})_{i,j} = \sum\limits_{l=1}^{n} (T^{m-1})_{i,l} T_{l,j} = \sum\limits_{l=i+1}^{j-1} (T^{m-1})_{i,l} T_{l,j} .
\end{align*}

Finally, we show, by induction on $j-i$, that each $L_{i,j}$ is polynomial on the elements of the set
\begin{align*}
\lbrace M_{i,j} \rbrace \cup \lbrace M_{r,s} : s-r < j-i \rbrace . 
\end{align*}
If $j-i \leq 0$, then $L_{i,j}=0$, so the claim holds. When $j-i>0$, as $L$ is nilpotent,
\begin{align*}
M_{i,j} &= \exp(L)_{i,j} = L_{i,j} + \sum\limits_{m=2}^{n-1} \frac{1}{m!} (L^{m})_{i,j} \\ \Rightarrow L_{i,j} &= M_{i,j} - \sum\limits_{m=2}^{n-1} \frac{1}{m!} (L^{m})_{i,j} .
\end{align*}
The result now follows from the induction hypothesis and from our previous claim, as this argument can be used to both construct such a matrix $L$ and to prove that it is uniquely determined.
\end{proof}

\subsection{Properties of commuting matrices}

We will now present a useful decomposition of $\mathbb{C}^{n}$ induced by the commuting matrices $A_{1}, \ldots, A_{k} \in \mathbb{C}^{n \times n}$. Let $\sigma(A_{i})$ denote the spectrum of the matrix $A_{i}$. In what follows, let
\begin{align*}
\boldsymbol{\lambda} = (\lambda_{1}, \ldots, \lambda_{k}) \in \sigma(A_{1}) \times \cdots \times \sigma(A_{k}) .
\end{align*}
We remind the reader that $\ker(A_{i} - \lambda_{i})^{n}$ corresponds to the generalised eigenspace of $\lambda_{i}$ of $A_{i}$. Moreover, we define the following subspaces:
\begin{align*}
\mathcal{V}_{\boldsymbol{\lambda}} = \bigcap \limits_{i=1}^{k} \ker(A_{i} - \lambda_{i} I)^{n}.
\end{align*}
Also, let $\Sigma = \lbrace \boldsymbol{\lambda} \in \sigma(A_{1}) \times \cdots \times \sigma(A_{k}) : \mathcal{V}_{\boldsymbol{\lambda}} \neq \lbrace \boldsymbol{0} \rbrace \rbrace$.

\begin{theorem}
\label{subspace_decomposition}
For all $\boldsymbol{\lambda} = (\lambda_{1}, \ldots, \lambda_{k}) \in \Sigma$ and for all $i \in \lbrace 1, \ldots, k \rbrace$, the following properties hold:

\begin{enumerate}

\item $\mathcal{V}_{\boldsymbol{\lambda}}$ is invariant under $A_{i}$.

\item $\sigma(A_{i} \restriction_{\mathcal{V}_{\boldsymbol{\lambda}}}) = \lbrace \lambda_{i} \rbrace$.

\item $\mathbb{C}^{n} = \bigoplus \limits_{\boldsymbol{\lambda} \in \Sigma} \mathcal{V}_{\boldsymbol{\lambda}} .$

\end{enumerate}
\end{theorem}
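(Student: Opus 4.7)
My plan is to establish the three claims in order, with the main work being the direct-sum decomposition in part~(3), which I would prove by induction on $k$ using the primary decomposition theorem for a single operator as the engine.

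For part~(1), I would invoke the standard fact that if two matrices commute then each preserves the generalised eigenspaces of the other. Concretely, since $A_i$ commutes with $A_j$, it commutes with $(A_j - \lambda_j I)^n$, so $\ker(A_j - \lambda_j I)^n$ is $A_i$-invariant for every $j$. Intersecting over $j$ shows $\mathcal{V}_{\boldsymbol{\lambda}}$ is $A_i$-invariant. Part~(2) is then immediate: any $v \in \mathcal{V}_{\boldsymbol{\lambda}}$ satisfies $(A_i - \lambda_i I)^n v = 0$, so the minimal polynomial of $A_i \restriction_{\mathcal{V}_{\boldsymbol{\lambda}}}$ divides $(x-\lambda_i)^n$, forcing $\sigma(A_i \restriction_{\mathcal{V}_{\boldsymbol{\lambda}}}) \subseteq \{\lambda_i\}$; equality uses $\mathcal{V}_{\boldsymbol{\lambda}} \neq \{\boldsymbol{0}\}$, which holds by the definition of $\Sigma$.

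For part~(3), I would induct on $k$. The base case $k=1$ is the classical primary decomposition $\mathbb{C}^n = \bigoplus_{\lambda \in \sigma(A_1)} \ker(A_1 - \lambda I)^n$. For the inductive step, assume the result for $A_1,\ldots,A_{k-1}$, giving
\begin{align*}
\mathbb{C}^n = \bigoplus_{\boldsymbol{\mu}} \mathcal{W}_{\boldsymbol{\mu}}, \qquad \mathcal{W}_{\boldsymbol{\mu}} = \bigcap_{i=1}^{k-1} \ker(A_i - \mu_i I)^n.
\end{align*}
By the same commuting argument used in~(1), each $\mathcal{W}_{\boldsymbol{\mu}}$ is $A_k$-invariant, so I may apply the primary decomposition to the restriction of $A_k$ to $\mathcal{W}_{\boldsymbol{\mu}}$:
\begin{align*}
\mathcal{W}_{\boldsymbol{\mu}} = \bigoplus_{\lambda_k \in \sigma(A_k \restriction_{\mathcal{W}_{\boldsymbol{\mu}}})} \ker\bigl((A_k - \lambda_k I)^n \restriction_{\mathcal{W}_{\boldsymbol{\mu}}}\bigr).
\end{align*}
The summand is exactly $\mathcal{W}_{\boldsymbol{\mu}} \cap \ker(A_k - \lambda_k I)^n = \mathcal{V}_{(\boldsymbol{\mu},\lambda_k)}$, and for $\lambda_k \in \sigma(A_k)$ not appearing as an eigenvalue of the restriction, the corresponding intersection is $\{\boldsymbol{0}\}$, so extending the index set to all of $\sigma(A_k)$ changes nothing. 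Substituting back gives $\mathbb{C}^n = \bigoplus_{\boldsymbol{\lambda} \in \Sigma} \mathcal{V}_{\boldsymbol{\lambda}}$.

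The main thing to be careful about is ensuring that the sum remains direct after refining each $\mathcal{W}_{\boldsymbol{\mu}}$; this is automatic because the refinement of each summand in an existing direct sum decomposition is again a direct sum decomposition. The only mildly subtle identification is $\ker((A_k - \lambda_k I)^n \restriction_{\mathcal{W}_{\boldsymbol{\mu}}}) = \mathcal{W}_{\boldsymbol{\mu}} \cap \ker(A_k - \lambda_k I)^n$, which follows immediately from the definition of restriction. No additional input beyond commutativity and the single-operator primary decomposition is needed.
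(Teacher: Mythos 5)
Your proof is correct and follows essentially the same route as the paper: both argue parts (1) and (2) from commutativity and nilpotency of $(A_i - \lambda_i I)\restriction_{\mathcal{V}_{\boldsymbol{\lambda}}}$, and both prove (3) by induction on $k$ with the single-operator primary decomposition (Jordan form) as the engine. The only cosmetic difference is the order of the induction: the paper peels off $A_k$ first to get the spaces $\mathcal{U}_j = \ker(A_k - \mu_j I)^n$ and then applies the inductive hypothesis to the restrictions of $A_1,\ldots,A_{k-1}$ on each $\mathcal{U}_j$, whereas you apply the inductive hypothesis to $A_1,\ldots,A_{k-1}$ on the ambient space first and then refine each resulting summand by $A_k$.
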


\begin{proof}
We show, by induction on $k$, that the subspaces $\mathcal{V}_{\boldsymbol{\lambda}}$ satisfy the properties above.

When $k = 1$, the result follows from the existence of Jordan Canonical Forms. When $k > 1$, suppose that $\sigma(A_{k}) = \lbrace \mu_{1}, \ldots, \mu_{m} \rbrace$, and let $\mathcal{U}_{j} = \ker(A_{k} - \mu_{j} I)^{n}$, for $j \in \lbrace 1, \ldots, m \rbrace$. Again, it follows from the existence of Jordan Canonical Forms that
\begin{align*}
\mathbb{C}^{n} = \bigoplus \limits_{j = 1}^{m} \mathcal{U}_{m} .
\end{align*}
In what follows, $i \in \lbrace 1, \ldots, k-1 \rbrace$ and $j \in \lbrace 1, \ldots, m \rbrace$. Now, as $A_{k}$ and $A_{i}$ commute, so do $(A_{k}-\mu_{j} I)$ and $A_{i}$. Therefore, for all $\boldsymbol{v} \in \mathcal{U}_{j}$, $(A_{k} - \mu_{j} I)^{n} A_{i} \boldsymbol{v} = A_{i} (A-\mu_{j} I)^{n} \boldsymbol{v} = \boldsymbol{0}$, so $A_{i} \boldsymbol{v} \in \mathcal{U}_{j}$, that is, $\mathcal{U}_{j}$ is invariant under $A_{i}$. The result follows from applying the induction hypothesis to the commuting operators $A_{i} \restriction_{\mathcal{U}_{j}}$.
\end{proof}

We will also make use of the following well-known result on simultaneous triangularisation of commuting matrices. See, for example, \cite{CommutingMatrices}.

\begin{theorem}
\label{simultaneous-triangularisation}
Given $k$ commuting matrices $A_{1}, \ldots, A_{k} \in \overline{\mathbb{Q}}^{n \times n}$, there exists a matrix $P \in \mathit{GL}_{n}(\overline{\mathbb{Q}})$ such that $P^{-1}A_{i}P$ is upper triangular for all $i \in \lbrace 1, \ldots, k \rbrace$.
\end{theorem}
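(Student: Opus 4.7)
The plan is to prove the statement by induction on the dimension $n$, following the classical argument for simultaneous triangularisation over an algebraically closed field, and verifying that it goes through over $\overline{\mathbb{Q}}$ rather than $\mathbb{C}$.

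The base case $n=1$ is immediate, since every $1\times 1$ matrix is already upper triangular. For the inductive step, the key lemma to establish first is that the commuting matrices $A_1, \ldots, A_k$ share a common eigenvector in $\overline{\mathbb{Q}}^n$. To see this, start with any eigenvalue $\lambda_1 \in \overline{\mathbb{Q}}$ of $A_1$ (which lies in $\overline{\mathbb{Q}}$ because it is a root of the characteristic polynomial of $A_1$, which has algebraic coefficients). The eigenspace $E_1 = \ker(A_1 - \lambda_1 I)$ has a basis in $\overline{\mathbb{Q}}^n$ and is invariant under each $A_j$ (since $A_j$ commutes with $A_1 - \lambda_1 I$). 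Thus $A_2$ restricts to an operator on $E_1$, which again has an eigenvalue $\lambda_2 \in \overline{\mathbb{Q}}$ and a nonzero eigenspace $E_2 \subseteq E_1$; this subspace is still invariant under $A_3, \ldots, A_k$. Iterating, we obtain a nonzero subspace contained in $\bigcap_{i=1}^k \ker(A_i - \lambda_i I)$, from which we pick a nonzero vector $\boldsymbol{v} \in \overline{\mathbb{Q}}^n$ that is a common eigenvector of all the $A_i$.

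Now extend $\{\boldsymbol{v}\}$ to a basis of $\overline{\mathbb{Q}}^n$, and let $Q \in GL_n(\overline{\mathbb{Q}})$ be the corresponding change-of-basis matrix. Then for each $i$,
\begin{align*}
Q^{-1} A_i Q = \begin{pmatrix} \lambda_i & \boldsymbol{w}_i^\top \\ \boldsymbol{0} & B_i \end{pmatrix},
\end{align*}
where $B_i \in \overline{\mathbb{Q}}^{(n-1)\times(n-1)}$. A direct block-multiplication check shows that the $B_i$ still commute pairwise: the commutation relation $A_i A_j = A_j A_i$ transports to $(Q^{-1}A_i Q)(Q^{-1}A_j Q) = (Q^{-1}A_j Q)(Q^{-1}A_i Q)$, whose lower-right block is exactly $B_i B_j = B_j B_i$. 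By the inductive hypothesis there exists $R \in GL_{n-1}(\overline{\mathbb{Q}})$ such that $R^{-1} B_i R$ is upper triangular for every $i$. Setting
\begin{align*}
P = Q \begin{pmatrix} 1 & \boldsymbol{0}^\top \\ \boldsymbol{0} & R \end{pmatrix} \in GL_n(\overline{\mathbb{Q}}),
\end{align*}
a block computation shows that $P^{-1} A_i P$ is upper triangular for every $i$, completing the induction.

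The only real subtlety is ensuring that the common eigenvector and the change of basis can be chosen to have algebraic entries; this is not an obstacle because $\overline{\mathbb{Q}}$ is itself algebraically closed, so eigenvalues of matrices over $\overline{\mathbb{Q}}$ lie in $\overline{\mathbb{Q}}$, and kernels of matrices over $\overline{\mathbb{Q}}$ admit bases over $\overline{\mathbb{Q}}$ (by Gaussian elimination). Everything else is a routine block-matrix calculation.
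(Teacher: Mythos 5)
The paper does not prove Theorem~\ref{simultaneous-triangularisation} itself; it cites it as a well-known result (referring the reader to a standard reference on commuting matrices). Your proof is the classical induction-on-dimension argument via a common eigenvector, and it is correct; you also correctly address the one point that matters here over the ambient field $\overline{\mathbb{Q}}$ rather than $\mathbb{C}$, namely that eigenvalues of algebraic matrices are algebraic and that kernels admit bases over $\overline{\mathbb{Q}}$, so the change-of-basis matrix $P$ can indeed be taken in $GL_n(\overline{\mathbb{Q}})$.
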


\subsection{Convex Polyhedra and Semi-Algebraic Sets}

A convex polyhedron is a subset of $\mathbb{R}^{n}$ of the form $\mathcal{P} = \lbrace \boldsymbol{x} \in \mathbb{R}^{n} : A \boldsymbol{x} \leq \boldsymbol{b} \rbrace$, where $A$ is a $d \times n$ matrix and $\boldsymbol{b} \in \mathbb{R}^{d}$. When all the entries of $A$ and coordinates of $\boldsymbol{b}$ are algebraic numbers, the convex polyhedron $\mathcal{P}$ is said to have an algebraic description.

A set $S \subseteq \mathbb{R}^{n}$ is said to be semi-algebraic if it is a Boolean combination of sets of the form $\lbrace \boldsymbol{x} \in \mathbb{R}^{n}: p(\boldsymbol{x}) \geq 0\rbrace$, where $p$ is a polynomial with integer coefficients. Equivalently, the semi-algebraic sets are those definable by the quantifier-free first-order formulas over the structure $(\mathbb{R}, <, +, \cdot, 0, 1)$.

It was shown by Alfred Tarski in \cite{Tarski} that the first-order theory of reals admits quantifier elimination. Therefore, the semi-algebraic sets are precisely the first-order definable sets.

\begin{theorem}[Tarski]
The first-order theory of reals is decidable.
\end{theorem}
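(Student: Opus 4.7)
The plan is to establish decidability through quantifier elimination: show that every first-order formula $\varphi(x_1,\ldots,x_n)$ over $(\mathbb{R},<,+,\cdot,0,1)$ is effectively equivalent to a quantifier-free formula. Once that is in hand, a sentence reduces to a Boolean combination of sign conditions on constant rational expressions, which is decided by elementary arithmetic. So the whole content is in the elimination procedure.

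By induction on formula structure, and because $\forall y\,\psi \equiv \neg\exists y\,\neg\psi$, it suffices to eliminate a single existential $\exists y\,\psi(\boldsymbol{x},y)$ where $\psi$ is quantifier-free. Putting $\psi$ into disjunctive normal form and distributing $\exists y$ over disjunctions, I reduce further to the case
\begin{align*}
\exists y \, \bigwedge_{j=1}^{m} \bigl( p_j(\boldsymbol{x},y) \mathrel{R_j} 0 \bigr), \qquad R_j \in \{<,=,>\},
\end{align*}
where each $p_j \in \mathbb{Z}[\boldsymbol{x},y]$. Viewing the $p_j$ as univariate polynomials in $y$ with coefficients in $\mathbb{Z}[\boldsymbol{x}]$, the task becomes: express, by a quantifier-free formula in $\boldsymbol{x}$, the existence of some real $y$ realising a prescribed sign pattern on $p_1,\ldots,p_m$.

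The core technical device is \emph{parametric sign determination}. For univariate polynomials over $\mathbb{R}$, a generalisation of Sturm's theorem (equivalently, the subresultant pseudo-remainder sequence of $p_i$ and $p_j'$, or of pairs $p_i,p_j$) computes, from the signs of finitely many polynomial expressions in the coefficients, both the number of real roots of each $p_i$ and the signs taken by every other $p_j$ at those roots. From this data one can read off whether the conjunction of sign conditions is satisfiable for some real $y$. Since the outputs are Boolean combinations of sign conditions on polynomials in $\boldsymbol{x}$, one obtains the desired quantifier-free equivalent.

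The main obstacle is the parametric nature of the coefficients: leading coefficients, resultants, and intermediate subresultants can vanish on subvarieties of $\boldsymbol{x}$-space, so the sign-determination procedure must be run conditionally on the signs of all such coefficient polynomials, with a separate branch for each consistent case. The resulting case tree is finite and effectively constructible, but keeping track of it — and, at each leaf, re-running the elimination with a possibly lower-degree polynomial — is the bookkeeping that makes the proof technically heavy, even though the conceptual idea of eliminating quantifiers one at a time by parametric root counting is straightforward. (A cleaner modern packaging of exactly this bookkeeping is Collins's Cylindrical Algebraic Decomposition.)
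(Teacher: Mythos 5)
The paper does not prove this theorem; it cites Tarski and notes that the result rests on quantifier elimination for $(\mathbb{R},<,+,\cdot,0,1)$. Your sketch is exactly that classical quantifier-elimination argument — reduce to a single existential over a conjunction of polynomial sign conditions and decide it by parametric Sturm/subresultant sign determination with case-splitting on degenerate coefficient loci — so it is the same approach the paper implicitly relies on, and it is correct (modulo the usual glossed-over point that, for purely strict sign conditions, one must also sample the open intervals between roots and the unbounded ends, not just the roots themselves).
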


See \cite{Renegar} and \cite{BPR06} for more efficient decision procedures for the first-order theory of reals.

\begin{definition}[Hilbert's Tenth Problem]
Given a polynomial $p \in \mathbb{Z}[x_{1}, \ldots, x_{k}]$, decide whether $p(\boldsymbol{x}) = 0$ admits a solution $\boldsymbol{x} \in \mathbb{N}^{k}$. Equivalently, given a semi-algebraic set $S \subseteq \mathbb{R}^{k}$, decide whether it intersects $\mathbb{Z}^{k}$.
\end{definition}

The following celebrated theorem, due to Yuri Matiyasevich, will be used in
our undecidability proof; see \cite{HTP} for a self-contained proof.

\begin{theorem}[Matiyasevich]
Hilbert's Tenth Problem is undecidable.
\end{theorem}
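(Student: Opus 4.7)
The plan is to derive undecidability of Hilbert's Tenth Problem from the undecidability of the Halting Problem via the MRDP theorem: every recursively enumerable subset of $\mathbb{N}^k$ is \emph{Diophantine}, i.e., of the form $\{\boldsymbol{x} \in \mathbb{N}^k : \exists \boldsymbol{y} \in \mathbb{N}^m, \, p(\boldsymbol{x},\boldsymbol{y}) = 0\}$ for some $p \in \mathbb{Z}[\boldsymbol{x},\boldsymbol{y}]$. Once this is established, specializing to a non-recursive r.e.\ set (e.g., the halting set of a universal Turing machine, which is r.e.\ but not recursive) exhibits a family of polynomials for which no algorithm can decide integer solvability, which immediately yields the theorem.

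I would split the argument into two ingredients. First, the Davis–Putnam–Robinson theorem: every r.e.\ set is \emph{exponential Diophantine}, i.e., definable by a formula of the form $\exists \boldsymbol{y}\, q(\boldsymbol{x},\boldsymbol{y},2^{y_1},\ldots,2^{y_m}) = 0$. This is proved by arithmetizing Turing machine computations in the style of Gödel, using a pairing/sequence encoding (for instance Gödel's $\beta$ function, or the Chinese remainder trick) together with exponentiation to bound and index the tape contents across a halting computation. Second—and this is Matiyasevich's decisive contribution—one shows that the ternary relation $\{(a,b,c) : b = a^c\}$ is itself Diophantine. Combining the two, each occurrence of $2^{y_i}$ in the exponential-Diophantine formula can be replaced by a fresh variable $z_i$ constrained by a polynomial system expressing $z_i = 2^{y_i}$; collapsing sums of squares then yields a single polynomial equation, completing the MRDP step.

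The only genuine obstacle is the Diophantine definability of exponentiation. I would follow Matiyasevich's original route through the Pell equation $X^2 - (a^2 - 1) Y^2 = 1$ for $a \geq 2$: its solutions $(X_n(a), Y_n(a))$ form a linear recurrence of order two, with $Y_n(a)$ growing like $(2a)^n$, and one can show that graphs of the maps $n \mapsto X_n(a)$ and $n \mapsto Y_n(a)$ are themselves Diophantine by writing down a finite system of polynomial conditions (second-order linear recurrence identities, congruence conditions modulo $X_n$, and divisibility facts such as $Y_n \mid Y_m \Leftrightarrow n \mid m$). The delicate core is a Diophantine characterisation of the relation $n \equiv m \pmod{2X_k - 1}$ or an analogous ``step'' relation, which together with the asymptotic $Y_n(a) \sim (2a)^n$ allows $b = a^c$ to be pinned down by polynomial conditions. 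Everything else—the DPR coding of computations and the final substitution—is essentially routine compared to this number-theoretic core, which is why the full self-contained argument can be relegated to \cite{HTP}.
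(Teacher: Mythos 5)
The paper does not prove this theorem; it is stated as a known result and the reader is referred to \cite{HTP} for a self-contained treatment. There is therefore no ``paper proof'' to compare against, and the only question is whether your sketch is a sound outline of the standard argument. It is: you correctly split the MRDP theorem into the Davis--Putnam--Robinson step (every r.e.\ set is exponential Diophantine, via G\"odel-style arithmetization of Turing machine runs) and Matiyasevich's step (the exponential relation $b=a^c$ is Diophantine), and you correctly observe that specializing to a non-recursive r.e.\ set such as the halting set then yields undecidability. The substitution trick for eliminating exponentials and the sum-of-squares collapse into a single polynomial are also stated correctly.

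One small historical imprecision: what you call ``Matiyasevich's original route through the Pell equation'' is not quite his original route. Matiyasevich's 1970 proof used the Fibonacci sequence (showing that $\{(n,F_{2n})\}$ is Diophantine); the Pell-equation presentation, built on the solutions $X_n(a),Y_n(a)$ of $X^2-(a^2-1)Y^2=1$, is the later streamlining due to Julia Robinson and collaborators and is what most modern expositions (including \cite{HTP}) follow. This does not affect correctness, but if you are attributing the Pell route you should credit Robinson. Otherwise your description of the number-theoretic core---second-order recurrence identities, the divisibility law $Y_n \mid Y_m \Leftrightarrow n \mid m$, congruences modulo terms of the sequence, and the exponential growth rate $Y_n(a)\sim(2a)^n$ used to pin down $a^c$---accurately reflects where the real work lies, and it is reasonable, as the paper does, to defer those details to the cited reference.
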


On the other hand, our proof of decidability of ALIP makes use of some techniques present in the proof of the following result, shown in \cite{KP}:

\begin{theorem}[Khachiyan and Porkolab]
It is decidable whether a given \emph{convex} semi-algebraic set $S \subseteq \mathbb{R}^{k}$ intersects $\mathbb{Z}^{k}$.
\end{theorem}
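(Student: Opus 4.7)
The strategy is Lenstra-style: induct on the dimension $k$, showing that in each step we either locate an integer point in $S$ or reduce the problem to finitely many instances of dimension $k-1$. All elementary tests on convex semi-algebraic sets (emptiness, supremum of a linear form, ball containment, and so on) are uniformly decidable by Tarski's theorem, so the algorithmic challenge lies in bounding the branching and the growth of descriptions through the recursion.

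First I would reduce to the case where $S$ is bounded. Unboundedness is detected by asking in Tarski's theory whether $S$ is contained in some algebraic box; if not, I would analyse the recession cone $\operatorname{rec}(S)$, which is itself a convex semi-algebraic cone (definable by taking infinite scalings) and can be handled by a combination of lower-dimensional recursion on its lineality space and Kronecker's theorem applied to the rational directions inside the cone. Having localised $S$ to an algebraic box, I invoke the \emph{flatness theorem} from the geometry of numbers: there is a computable dimension-dependent constant $f(k)$ such that every convex body in $\mathbb{R}^{k}$ of lattice width exceeding $f(k)$ contains a point of $\mathbb{Z}^{k}$. Using Tarski's theorem one checks whether $S$ contains a Euclidean ball whose radius is large enough to force lattice width $>f(k)$ in every direction; if so, then $S \cap \mathbb{Z}^{k}$ is non-empty and we are done.

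Otherwise $S$ is lattice-flat, and the flatness theorem guarantees a direction $\boldsymbol{w} \in \mathbb{Z}^{k} \setminus \{\boldsymbol{0}\}$ along which $S$ is narrow. To obtain such a $\boldsymbol{w}$ effectively, I would first compute a John-type ellipsoid inscribed in $S$ and circumscribing a constant-factor homothet of it (expressible in the first-order theory of reals), then apply an LLL-style shortest-vector computation in the dual lattice of that ellipsoid to produce a candidate short-width direction of controlled bit-size. For that $\boldsymbol{w}$ the quantities $\min_{\boldsymbol{x}\in S}\langle \boldsymbol{w},\boldsymbol{x}\rangle$ and $\max_{\boldsymbol{x}\in S}\langle \boldsymbol{w},\boldsymbol{x}\rangle$ are algebraic numbers differing by at most $f(k)$, so one enumerates the finitely many integers $n$ in this interval and recurses on the slice
\begin{align*}
S_{n} \;=\; S \cap \{\boldsymbol{x} : \langle \boldsymbol{w}, \boldsymbol{x}\rangle = n\},
\end{align*}
which after a unimodular change of basis (computed by extending $\boldsymbol{w}$ to a $\mathbb{Z}$-basis via Hermite normal form) becomes a convex semi-algebraic subset of $\mathbb{R}^{k-1}$ with a defining formula obtained by substitution.

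The main obstacle is effectiveness control. One must produce a computable flatness constant $f(k)$ of manageable size, bound the bit-complexity of $\boldsymbol{w}$ in terms of the description size of $S$, and ensure that the degrees and heights of the defining polynomials of successive slices $S_{n}$ do not blow up beyond what the induction can absorb. This requires careful algebraic-number-theoretic bookkeeping (height–degree estimates on substitutions, and Lenstra-type LLL guarantees for the short-vector search) together with a geometry-of-numbers analysis showing that the combined branching factor at each level is bounded by a function of $k$ alone. With these estimates in place, the recursion terminates after finitely many steps, yielding the decision procedure; Khachiyan and Porkolab further sharpen the bookkeeping so that the whole procedure runs in polynomial time for each fixed $k$.
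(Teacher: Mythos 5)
The paper does not actually prove this theorem: it is stated with attribution as a known result of Khachiyan and Porkolab, and only selected techniques from their argument are later borrowed for the paper's own ALIP decidability proof. So there is no in-paper proof against which to check your sketch. That said, your outline is a faithful high-level account of the Lenstra-style argument that Khachiyan and Porkolab actually give: reduction to the bounded case, an inscribed/circumscribed ellipsoid pair definable in the first-order theory of the reals, the flatness theorem, an LLL-type shortest-vector search yielding a flat integer direction, and recursion on the finitely many hyperplane slices after a unimodular change of coordinates, with Renegar-style quantitative quantifier elimination supplying the effectiveness you correctly identify as the technical crux. Two details are off. First, the lattice-width test is not "does $S$ contain a sufficiently large Euclidean ball"; one rounds $S$ by the ellipsoid first and then compares the shortest vector of the transformed lattice against a dimension-dependent threshold, so the ellipsoid computation must precede rather than follow the width test. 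Second, Khachiyan and Porkolab do not invoke Kronecker's theorem for the unbounded case; they instead establish an explicit a priori bound on the norm of a smallest integer point in a convex semi-algebraic set, which collapses the problem to the bounded case outright. (The recession-cone-plus-Kronecker idea you propose is actually closer to what the present paper does in its own ALIP proof, which is inspired by Khachiyan--Porkolab but treats unboundedness differently.) With those corrections your sketch captures the right structure and locates the genuine difficulties in the right places.
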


\subsection{Fourier-Motzkin Elimination}

Fourier-Motzkin elimination is a simple method for solving systems of
inequalities. Historically, it was the first algorithm used in solving
linear programming, before more efficient procedures such as the
simplex algorithm were discovered. The procedure consists in isolating one
variable at a time and matching all its lower and upper bounds. Note
that this method preserves the set of solutions on the remaining
variables, so a solution of the reduced system can always be extended
to a solution of the original one.

\begin{theorem}
\label{thm:fme}
  By using Fourier-Motzkin elimination, it is decidable whether a
  given convex polyhedron
  $\mathcal{P} = \lbrace \boldsymbol{x} \in \mathbb{R}^{n} : \pi
  A\boldsymbol{x} < \boldsymbol{b} \rbrace$,
  where the entries of $A$ are all real algebraic numbers and
  those of $\boldsymbol{b}$ are real linear forms in logarithms of
  algebraic numbers, is empty.  Moreover, if $\mathcal{P}$ is
  non-empty one can effectively find a rational vector
  $\boldsymbol{q} \in \mathcal{P}$.
\end{theorem}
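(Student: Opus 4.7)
The plan is to run Fourier-Motzkin elimination on $\pi A \boldsymbol{x} < \boldsymbol{b}$ one variable at a time, while checking that the syntactic shape of the system is preserved: namely, algebraic coefficients scaled by $\pi$ on the left, and real linear forms in logarithms of algebraic numbers on the right. Concretely, after selecting a variable $x_\ell$ to eliminate, I would sort the inequalities into those in which $x_\ell$ has positive, negative, or zero coefficient; each positive/negative pair is combined with appropriately chosen strictly positive algebraic multipliers so as to cancel $x_\ell$. The resulting inequality takes the form $\pi \sum_j c_j x_j < d$, where each $c_j$ is an algebraic combination of the original algebraic coefficients, and $d$ is an algebraic combination of two entries of $\boldsymbol{b}$. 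Since real linear forms in logarithms of algebraic numbers are closed under addition and under multiplication by algebraic numbers, as recorded in the preliminaries, $d$ is again such a linear form, so the new system has exactly the same shape; since the multipliers are strictly positive, strict inequalities stay strict.

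After eliminating all $n$ variables, one is left with a finite collection of constraints of the form $0 < d_\ell$, each $d_\ell$ being a real linear form in logarithms of algebraic numbers. To decide the sign of such a $d_\ell$, I would first apply Masser's theorem to compute a basis of the multiplicative relations among the algebraic numbers appearing under the logarithms, and use it to rewrite $d_\ell$ in such a way that the remaining logarithms are linearly independent over $\mathbb{Q}$. Baker's theorem then implies that the rewritten form vanishes if and only if all of its algebraic coefficients (including the constant term) are zero; if non-zero, numerically approximating it to sufficient precision reveals its sign. The original system is non-empty if and only if every $d_\ell$ is strictly positive.

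When the system is non-empty, I would construct a rational witness $\boldsymbol{q}$ coordinate by coordinate, in the reverse of the elimination order. Having already fixed rationals $q_1,\ldots,q_{k-1}$ satisfying the reduced system in $x_1,\ldots,x_{k-1}$, the inequalities eliminated at step $k$ become, after this substitution, strict lower and upper bounds on $x_k$ that are effectively expressible in terms of $\pi$ and real linear forms in logarithms of algebraic numbers. Consistency of the reduced system guarantees that the interval $(L,U)$ of admissible values for $x_k$ is non-empty, and the Baker/Masser machinery lets me decide whether $L<U$ and approximate both endpoints to arbitrary precision, whereupon any rational $q_k \in (L,U)$ can be picked to extend the partial solution. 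Iterating yields the desired $\boldsymbol{q} \in \mathbb{Q}^n \cap \mathcal{P}$.

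I do not expect a serious obstacle here; the main technical care is bookkeeping, ensuring that at every Fourier-Motzkin step and during the backward reconstruction, the data stays inside the effectively-computable classes on which Masser's and Baker's theorems apply. That is precisely what the closure properties quoted above ensure. The customary worry with Fourier-Motzkin elimination, the doubly exponential blow-up in the number of inequalities, is immaterial for the decidability claim stated.
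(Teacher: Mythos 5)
Your proposal follows the paper's own argument: run Fourier-Motzkin elimination while isolating the terms $\pi x_i$, observe that the algebraic-coefficient/linear-form-in-logarithms shape of the system is preserved by the closure properties, decide emptiness by comparing linear forms in logarithms via Masser's theorem (to eliminate relations) and Baker's theorem (to decide zero, then approximate to get the sign), and obtain a rational witness by back-substitution. Your write-up is more detailed than the paper's terse proof, but the approach and the key ingredients are identical.
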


\begin{proof}
When using Fourier-Motzkin elimination, isolate each term $\pi x_{i}$, instead of just isolating the variable $x_{i}$. Note that the coefficients of the terms $\pi x_{i}$ will always be algebraic, and the loose constants will always be linear forms in logarithms of algebraic numbers, which are closed under multiplication by algebraic numbers, and which can be effectively compared by using Baker's Theorem.
\end{proof}
\section{Example}

Let $\lambda_{1}, \lambda_{2} \in \mathbb{R} \cap \bar{\mathbb{Q}}$ such that $\lambda_{1} > \lambda_{2}$ and consider the following commuting matrices $A_{1}, A_{2} \in (\mathbb{R} \cap \bar{\mathbb{Q}})^{2 \times 2}$:

\begin{align*}
A_{i} = \begin{pmatrix} \lambda_{i} && 1 \\ 0 && \lambda_{i} \end{pmatrix}, i \in \lbrace 1, 2 \rbrace .
\end{align*}

One can easily see that
\begin{align*}
\exp(A_{i} t_{i}) &= \exp(\lambda_{i} t_{i} I) \exp(t_{i} (A_{i} - \lambda_{i} I)) \\
&= \exp(\lambda_{i} t_{i})
\exp \begin{pmatrix} 0 && t_{i} \\ 0 && 0 \end{pmatrix} \\
&= \exp (\lambda_{i} t_{i})
\begin{pmatrix} 1 && t_{i} \\ 0 && 1 \end{pmatrix}, i \in \lbrace 1, 2 \rbrace .
\end{align*}

Let $c_{1}, c_{2} \in \mathbb{R} \cap \bar{\mathbb{Q}}$ such that $c_{1}, c_{2} > 0$, and let
\begin{align*}
C = \begin{pmatrix} c_{1} && c_{2} \\ 0 && c_{1} \end{pmatrix} .
\end{align*}

We would like to determine whether there exists a solution $t_{1}, t_{2} \in \mathbb{R}$, $t_{1}, t_{2} \geq 0$ to
\begin{align*}
\exp(A_{1} t_{1}) \exp(A_{2} t_{2}) = C
\end{align*}

This amounts to solving the following system of equations:
\begin{align*}
&\begin{cases}
\exp(\lambda_{1} t_{1} + \lambda_{2} t_{2}) = c_{1} \\
(t_{1} + t_{2}) \exp(\lambda_{1} t_{1} + \lambda_{2} t_{2}) = c_{2}
\end{cases}
&\Leftrightarrow \\
&\begin{cases}
\exp(t_{1} (\lambda_{1} - \lambda_{2}) + \frac{c_{2}}{c_{1}} \lambda_{2}) = c_{1} \\
t_{2} = \frac{c_{2}}{c_{1}} - t_{1}
\end{cases}
&\Leftrightarrow \\
&\begin{cases}
t_{1} = \frac{\log(c_{1}) - \frac{c_{2}}{c_{1}} \lambda_{2}}{\lambda_{1} - \lambda_{2}} \\
t_{2} = \frac{\frac{c_{2}}{c_{1}} \lambda_{1} - \log(c_{1})}{\lambda_{1} - \lambda_{2}}
\end{cases}
\end{align*}

Then $t_{1}, t_{2} \geq 0$ holds if and only if
\begin{align*}
\lambda_{2} \leq \frac{c_{1}}{c_{2}} \log(c_{1}) \leq \lambda_{1} .
\end{align*}

Whether these inequalities hold amounts to comparing linear forms in logarithms of algebraic numbers.
\section{Decidability in the Commutative Case}

We start this section by reducing the Generalised MEP with commuting
matrices to LEP.  The intuition behind it is quite simple: perform a
change of basis so that the matrices $A_{1}, \ldots, A_{k}$, as well
as $C$, become block-diagonal matrices, with each block being upper
triangular; we can then separate the problem into several
sub-instances, corresponding to the diagonal blocks, and finally make
use of our uniqueness result concerning strictly upper triangular
logarithms of upper unitriangular matrices.

\begin{theorem}
  The Generalised MEP with commuting matrices reduces to LEP.
\end{theorem}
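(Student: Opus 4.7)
First, I would check whether $C$ commutes with each $A_i$; this is necessary because any product $\prod_i \exp(A_i t_i)$ of exponentials of pairwise-commuting matrices commutes with every $A_j$. Assuming the check passes, my plan is to use Theorem \ref{subspace_decomposition} to decompose $\mathbb{C}^n = \bigoplus_{\boldsymbol{\lambda}\in\Sigma} \mathcal{V}_{\boldsymbol{\lambda}}$. Each $\mathcal{V}_{\boldsymbol{\lambda}}$ is invariant under all $A_i$ and, thanks to the commutation of $C$ with each $A_i$, under $C$ as well (for $v \in \mathcal{V}_{\boldsymbol{\lambda}}$, $(A_i-\lambda_i I)^n C v = C (A_i-\lambda_i I)^n v = 0$). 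In a basis adapted to this decomposition, effectively computable since all entries lie in $\overline{\mathbb{Q}}$, both sides of (\ref{MEP}) become block-diagonal, so the equation splits into one restricted equation $\prod_i \exp(A_i|_{\mathcal{V}_{\boldsymbol{\lambda}}} t_i) = C|_{\mathcal{V}_{\boldsymbol{\lambda}}}$ per $\boldsymbol{\lambda}\in\Sigma$, sharing the same time variables $t_1,\ldots,t_k$.

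Within each block I write $A_i|_{\mathcal{V}_{\boldsymbol{\lambda}}} = \lambda_i I + N_i^{(\boldsymbol{\lambda})}$, where $N_i^{(\boldsymbol{\lambda})}$ is nilpotent. The $N_i^{(\boldsymbol{\lambda})}$ mutually commute (being restrictions of the commuting $A_i - \lambda_i I$ to an invariant subspace), so Theorem \ref{simultaneous-triangularisation} lets me simultaneously triangularize them, and nilpotence renders them strictly upper triangular in the new basis. Then
\begin{align*}
\prod_i \exp(A_i|_{\mathcal{V}_{\boldsymbol{\lambda}}} t_i) = \exp\left(\sum_i \lambda_i t_i\right) \exp\left(\sum_i t_i N_i^{(\boldsymbol{\lambda})}\right),
\end{align*}
with the second factor upper unitriangular. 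Matching this to $C|_{\mathcal{V}_{\boldsymbol{\lambda}}}$ forces every diagonal entry of $C|_{\mathcal{V}_{\boldsymbol{\lambda}}}$ in this basis to share a common value $c_{\boldsymbol{\lambda}}$ (otherwise the algorithm returns ``no''); the scalar factor must then satisfy $\exp(\sum_i \lambda_i t_i) = c_{\boldsymbol{\lambda}}$, and the unitriangular factor must equal $U_{\boldsymbol{\lambda}} := c_{\boldsymbol{\lambda}}^{-1} C|_{\mathcal{V}_{\boldsymbol{\lambda}}}$.

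Next I would apply Theorem \ref{logarithm_uniqueness} to $U_{\boldsymbol{\lambda}}$, obtaining a unique strictly upper triangular logarithm $L_{\boldsymbol{\lambda}}$, with algebraic entries computable from those of $U_{\boldsymbol{\lambda}}$. Since $\sum_i t_i N_i^{(\boldsymbol{\lambda})}$ is also strictly upper triangular with $\exp$ equal to $U_{\boldsymbol{\lambda}}$, uniqueness yields the linear equality $\sum_i t_i N_i^{(\boldsymbol{\lambda})} = L_{\boldsymbol{\lambda}}$. Per block I therefore obtain one scalar equation of the form $\exp(\sum_i \lambda_i t_i) = c_{\boldsymbol{\lambda}}\exp(0)$, which already fits (\ref{single_eqn_form}), together with finitely many complex linear equations in the $t_i$. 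Each complex linear equation splits via real and imaginary parts into two real linear equations in $(\Re(t_i),\Im(t_i))$ with real algebraic coefficients, which I append to the LEP polyhedron alongside the original $\mathcal{P}$ (encoded as paired $\leq$ and $\geq$ inequalities). The resulting LEP instance is solvable iff the original Generalised MEP instance is.

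The subtlety to guard against is that LEP cannot faithfully express an exact linear equality through its exponential channel: writing $\exp(\sum_i \nu_i t_i) = \exp(L)$ would lose information by admitting $2\pi i\mathbb{Z}$ shifts. Routing the linear constraints through the polyhedron rather than through the exponential equations side-steps this, and it is precisely Theorem \ref{logarithm_uniqueness} that authorises treating the matching of logarithms as a genuine equality rather than a congruence. The remaining bookkeeping --- that $\lambda_i$, $c_{\boldsymbol{\lambda}}$, and the entries of $N_i^{(\boldsymbol{\lambda})}$ and $L_{\boldsymbol{\lambda}}$ are all algebraic and effectively computable from the input --- follows from Theorems \ref{subspace_decomposition}, \ref{simultaneous-triangularisation}, and \ref{logarithm_uniqueness}, together with standard effective computation of Jordan data over $\overline{\mathbb{Q}}$.
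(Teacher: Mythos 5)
Your proof is correct and follows essentially the same route as the paper: decompose $\mathbb{C}^n$ via Theorem~\ref{subspace_decomposition}, block-diagonalize and simultaneously triangularize each block via Theorem~\ref{simultaneous-triangularisation}, factor out the scalar diagonal part of the product of exponentials, and invoke Theorem~\ref{logarithm_uniqueness} to convert the nilpotent-part matching into a system of linear equations over the $t_i$ with algebraic coefficients. The only divergence is at the very end: the paper disposes of those linear equations by a parametric change of variables (substituting the general solution of the linear system into both the exponential equations and the polyhedron), whereas you keep the original variables and fold the linear equations into the LEP polyhedron as paired opposite inequalities after splitting into real and imaginary parts --- both are sound, and your explicit observation that these constraints must travel through the polyhedron rather than through the exponential channel (where the $2\pi i\mathbb{Z}$ ambiguity would lose information) correctly identifies why Theorem~\ref{logarithm_uniqueness} is load-bearing; one small point to make explicit is that before applying Theorem~\ref{logarithm_uniqueness} you must verify not only that the diagonal of $C\restriction_{\mathcal{V}_{\boldsymbol{\lambda}}}$ is constant but also that this block is upper triangular in the triangularizing basis, returning ``no'' otherwise.
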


\begin{proof}
  Consider an instance of the generalised MEP, as given in Definition~\ref{def:MEP},
  with commuting $n\times n$ matrices $A_1,\ldots,A_k$ and target
  matrix $C$.

  We first show how to define a matrix $P$ such that each matrix
  $P^{-1}A_iP$ is block diagonal, $i=1,\ldots,k$, with each block
  being moreover upper triangular.

  By Theorem~\ref{subspace_decomposition} we can write $\mathbb{C}^n$
  as a direct sum of subspaces $\mathbb{C}^n = \oplus_{j=1}^b \mathcal{V}_j$
  such that for every subspace $\mathcal{V}_j$ and matrix $A_i$, $\mathcal{V}_j$ is an
  invariant subspace of $A_i$ on which $A_i$ has a single eigenvalue
  $\lambda_i^{(j)}$.
 
  Define a matrix $Q$ by picking an algebraic basis for each
  $\mathcal{V}_j$ and successively taking the vectors of each basis to
  be the columns of $Q$. Then, each matrix $Q^{-1} A_{i} Q$ is
  block-diagonal, where the $j$-th block is a matrix $B^{(j)}_i$ that
  represents $A_{i} \restriction{\mathcal{V}_j}$, $j=1,\ldots,b$.  

  Fixing $j\in\{1,\ldots,b\}$, note that the
  matrices $B_1^{(j)},\ldots,B_k^{(j)}$ all commute.  Thus we
  may apply Theorem \ref{simultaneous-triangularisation} to obtain an
  algebraic matrix $M_j$ such that each matrix $M_j^{-1} B^{(j)}_{i} M_j$
  is upper triangular, $i=1,\ldots,k$.  Thus we can write
  \[ M_j^{-1} B^{(j)}_{i} M_j = \lambda_i^{(j)}I + N_i^{(j)} \]
  for some strictly upper triangular matrix $ N_i^{(j)}$.

  We define $M$ to be the block-diagonal matrix with blocks $M_1,\ldots,M_b$.
  Letting $P=QM$, it is then the case
  that $P^{-1} A_{i} P$ is block-diagonal, with the $j$-th block being
  $\lambda_i^{(j)}I + N_i^{(j)}$ for $j=1,\ldots,b$.  Now
\begin{align}
\prod \limits_{i=1}^{k} \exp(A_{i} t_{i}) = C \Leftrightarrow \prod \limits_{i=1}^{k} \exp(P^{-1}A_{i}P t_{i}) = P^{-1}CP .
\label{eq:block}
\end{align}

If $P^{-1}CP$ is not block-diagonal, with each block being upper
triangular and with the same entries along the diagonal, then Equation
(\ref{eq:block}) has no solution and the problem instance must be
negative. Otherwise, denoting the blocks $P^{-1}CP$ by $D^{(j)}$ for
$j \in \lbrace 1, \ldots, b \rbrace$, our problem amounts to
simultaneously solving the system of matrix equations
\begin{align}
\prod\limits_{i=1}^{k} \exp\big(\big(\lambda_i^{(j)}I + N_i^{(j)}\big)t_{i}\big) = D^{(j)}, \quad j \in \lbrace 1, \ldots, b \rbrace
\label{eq:main1}
\end{align}
with one equation for each block.

For each fixed $j$, the matrices $N_{i}^{(j)}$ inherit commutativity from
the matrices $B^{(j)}_{i}$, so we have
\begin{align*}
\prod\limits_{i=1}^{k} \exp((\lambda_i^{(j)}I + N_i^{(j)})t_{i}) &=
   \exp\big(\sum_{i=1}^k (\lambda_i^{(j)}I  + 
 N_i^{(j)}) t_i \big)\\
&= \exp\big(\sum_{i=1}^k \lambda_i^{(j)} t_i\big) \cdot
   \exp\big(\sum_{i=1}^k N_i^{(j)} t_i \big) .
\end{align*}

Hence the system (\ref{eq:main1}) is equivalent to
\begin{align} 
\exp\big(\sum_{i=1}^k \lambda_i^{(j)} t_i\big) \cdot
   \exp\big(\sum_{i=1}^k N_i^{(j)} t_i \big)  = D^{(j)}
\label{eq:main2}
\end{align}
for $j=1,\ldots,b$.

By assumption, the diagonal entries of each matrix $D^{(j)}$ are
equal to a unique value, say $c^{(j)}$. 
Since the diagonal entries of
$\exp\left(\sum_{i=1}^kN^{(j)}t_i\right)$
are all $1$, the equation system (\ref{eq:main2}) is equivalent to:
\begin{align*}
\exp\big(\sum_{i=1}^k \lambda_i^{(j)} t_i\big)
= c^{(j)} \mbox{ and }\exp\big(\sum_{i=1}^k N_i^{(j)} t_i \big)
=\frac{1}{c^{(j)}} D^{(j)}
\end{align*}
for $j=1,\ldots,b$.

Applying Theorem \ref{logarithm_uniqueness}, the
above system can equivalently be written
\begin{align*}
\exp\big(\sum_{i=1}^k \lambda_i^{(j)} t_i\big)
= c^{(j)} \mbox{ and } \sum_{i=1}^k
N_i^{(j)} t_i =
S^{(j)}
\end{align*}
for  some effectively computable matrix
$S^{(j)}$ with algebraic entries, $j=1,\ldots,b$.

Except for the additional linear equations, this has the form of an
instance of LEP.
 However we can eliminate the linear equations by
performing a linear change of variables, i.e., by computing the
solution of the system in parametric form.  Thus we finally arrive at
an instance of LEP.
\end{proof}

In the following result, we essentially solve the system of equations \ref{single_eqn_form}, reducing it to the simpler problem that really lies at its heart.

\begin{theorem}
\label{reference-for-log}
LEP reduces to ALIP.
\end{theorem}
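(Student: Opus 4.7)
The plan is to reduce the complex-exponential equations to an integer linear system by taking logarithms. For each $j \in J$, assuming $c_j \neq 0$ (else the instance is clearly infeasible, since $\exp$ is never zero), the equation in (\ref{single_eqn_form}) is equivalent to
\begin{align*}
\sum_{i \in I} \lambda_i^{(j)} t_i \;=\; \log(c_j) + d_j + 2\pi i \, n_j
\end{align*}
for some unknown integer $n_j \in \mathbb{Z}$, which I treat as a new unknown of the instance (it absorbs both the branch ambiguity of $\log$ and the fixed shift between $\log(c_j \exp(d_j))$ and $\log(c_j) + d_j$). I would then split each such complex equation into its real and imaginary parts. Writing $\lambda_i^{(j)} = a_i^{(j)} + i b_i^{(j)}$ and $t_i = x_i + i y_i$, and observing that $\log|c_j| = \frac{1}{2}\log(c_j\overline{c_j})$ and $\arg(c_j) = -i\log(c_j/|c_j|)$ are both real linear forms in logarithms of algebraic numbers (as $c_j \overline{c_j}$ and $c_j/|c_j|$ are algebraic), each split produces two real linear equations on the unknowns $x_i, y_i, n_j$ in which every coefficient of an $x_i$ or $y_i$ is real algebraic, each $n_j$ occurs only with coefficient $\pm 2\pi$, and every constant term is a real linear form in logarithms of algebraics.

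Adjoining the polyhedral constraints $\mathcal{P}$ on $(x_1, \ldots, x_k, y_1, \ldots, y_k)$, whose coefficients and constants are algebraic (and hence trivially real linear forms in logarithms of algebraics), I obtain a finite mixed system of linear equalities and inequalities in the continuous unknowns $x_i, y_i$ and the integer unknowns $n_j$. The core step is then to eliminate the continuous unknowns while preserving the shape of the coefficients. I would first apply Gaussian elimination over $\overline{\mathbb{Q}}$ to the equalities, using each in turn to solve for some $x_i$ or $y_i$ and substituting back into the remaining equalities and inequalities; the pivot factors are algebraic, so $\pi$ is neither introduced into the coefficients of the $x_i, y_i$ nor detached from the $n_j$. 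Any surviving equalities can then be written as pairs of inequalities, after which Fourier--Motzkin elimination, exactly as in Theorem~\ref{thm:fme}, projects out the remaining continuous variables.

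Because the set of real linear forms in logarithms of algebraic numbers is closed under addition and under multiplication by algebraic numbers, every constraint produced at every step still has the shape $\alpha \boldsymbol{v} + \pi \gamma \boldsymbol{n} \leq \beta$, with $\alpha$ and $\gamma$ having real algebraic entries and $\beta$ a real linear form in logarithms of algebraics. Once no continuous $x_i, y_i$ remain, dividing each such inequality through by $\pi$ turns it into $\gamma \boldsymbol{n} \leq \frac{1}{\pi}\beta$, which is exactly the shape of an ALIP instance on the integer unknowns $\boldsymbol{n}$. The main obstacle will be the coefficient bookkeeping that guarantees the factor of $\pi$ remains attached only to the $n_j$ throughout both Gaussian and Fourier--Motzkin elimination; a secondary point is effectivity, which is assured because equality of algebraic numbers is decidable and sign comparison of real linear forms in logarithms of algebraics is decidable via Baker's theorem, as already exploited in Theorem~\ref{thm:fme}.
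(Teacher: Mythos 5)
Your proposal is correct and follows essentially the same approach as the paper: take logarithms to convert each equation into a complex linear constraint with a fresh integer unknown $n_j$ absorbing the branch ambiguity, then eliminate the continuous unknowns while keeping the factor of $\pi$ attached to the $n_j$, arriving at an ALIP instance. The paper packages the elimination more abstractly, as the forward image $\mathcal{Q}$ of $\mathcal{P}$ under $\boldsymbol{t}\mapsto A\boldsymbol{t}$ followed by a pullback along $\boldsymbol{v}\mapsto\boldsymbol{d}+\log\boldsymbol{c}+2i\pi\boldsymbol{v}$, whereas you carry out the same projection explicitly via Gaussian and Fourier--Motzkin elimination, but the two computations coincide.
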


\begin{proof}
Consider an instance of LEP, comprising a system of equations
\begin{align}
 \exp\left(\sum_{\ell=1}^k \lambda_\ell^{(j)} t_\ell \right) = c_j \exp (d_j) 
\quad j=1,\ldots,b,
\label{eq:LEPinstance}
\end{align}
and polyhedron $\mathcal{P}\subseteq \mathbb{R}^{2k}$, as described in
Definition~\ref{def:LEP}.

Throughout this proof, let $\log$ denote a fixed logarithm branch that
is defined on all the numbers $c_j, \exp(d_j)$ appearing
above, and for which $\log(-1) = i \pi$. Note that if any $c_j=0$ for
some $j$ then (\ref{eq:LEPinstance}) has no solution. Otherwise, by
applying $\log$ to each equation in (\ref{eq:LEPinstance}),
we get:
\begin{align}
\sum_{\ell=1}^k \lambda_\ell^{(j)} t_\ell = d_j+\log(c_j) + 2i\pi n_j \quad j=1,\ldots,b,
\label{eq:logs}
\end{align}
where $n_j \in \mathbb{Z}$.

The system of equations (\ref{eq:logs}) can be written in matrix form as
\begin{align*}
A \boldsymbol{t} \in \boldsymbol{d}+\log(\boldsymbol{c}) + 
2i\pi \mathbb{Z}^b \, ,
\end{align*}
where $A$ is the $b\times k$ matrix with $A_{j,\ell} = \lambda_\ell^{(j)}$ and $\log$
is applied pointwise to vectors.
Now, defining the convex polyhedron $\mathcal{Q}\subseteq \mathbb{R}^{2b}$ by
\begin{align*}
\mathcal{Q} = \lbrace &(\Re(A\boldsymbol{y}), \Im(A\boldsymbol{y})) : 
\boldsymbol{y}\in\mathbb{C}^k, (\Re(\boldsymbol{y}), \Im(\boldsymbol{y})) \in \mathcal{P} \rbrace \, ,
\end{align*}
it suffices to decide whether the affine lattice
$\boldsymbol{d} + \log(\boldsymbol{c})+ 2i\pi  \mathbb{Z}^b$
intersects
$\lbrace \boldsymbol{x} \in \mathbb{C}^b : (\Re(\boldsymbol{x}),
\Im(\boldsymbol{x})) \in \mathcal{Q} \rbrace$.

Define $f:\mathbb{R}^b \rightarrow \mathbb{C}^b$ by
$f(\boldsymbol{v})= \boldsymbol{d} + \log(\boldsymbol{c}) +
2i\pi \boldsymbol{v}$,
and define a convex polyhedron $\mathcal{T}\subseteq \mathbb{R}^b$ by
\[\mathcal{T}=\{ \boldsymbol{v}\in\mathbb{R}^b : (\Re(f(\boldsymbol{v})),
\Im (f(\boldsymbol{v}))) \in \mathcal{Q} \} \, . \]

The problem then amounts to deciding whether the convex polyhedron
$\mathcal{T}$ intersects contains an integer point. Crucially, the
description of the convex polyhedron $\mathcal{T}$ is of the form
$\pi B\boldsymbol{x} \leq \boldsymbol{b}$, for some matrix $B$ and
vector $\boldsymbol{b}$ such that the entries of $B$ are real
algebraic and the components of $\boldsymbol{b}$ are real linear forms
in logarithms of algebraic numbers.  But this is the form of an
instance of ALIP.
\end{proof}

We are left with the task of showing that ALIP is decidable. The argument essentially consists of reducing to a lower-dimensional instance whenever possible, and eventually either using the fact that the polyhedron is bounded to test whether it intersects the integer lattice or using Kronecker's theorem to show that, by a density argument, it must intersect the integer lattice.

\begin{theorem}
ALIP is decidable.
\end{theorem}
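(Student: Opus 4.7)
By strong induction on $n$, the dimension of the ambient space. Write the ALIP instance as $\mathcal{P} = \{\boldsymbol{x} \in \mathbb{R}^n : \pi A \boldsymbol{x} \leq \boldsymbol{b}\}$. The base case $n = 0$ reduces to testing $0 \leq b_i$ for each $i$, which is a sign comparison of linear forms in logarithms, handled by Baker's theorem.

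For the inductive step, I would first apply Theorem~\ref{thm:fme} both to $\mathcal{P}$ itself (to test emptiness) and to its strict version (to test whether $\mathcal{P}$ has non-empty interior $\mathcal{P}^\circ$). If $\mathcal{P}$ is non-empty but has empty interior, some nonnegative combination of the constraints is an equality $\pi \boldsymbol{a} \cdot \boldsymbol{x} = c$ forced on all of $\mathcal{P}$. Any integer solution must satisfy this linear Diophantine equation over the algebraic numbers; I would use Baker's theorem to check whether $c/\pi$ is algebraic (necessary for solvability), and then the affine-lattice proposition of Section~2.2 to parameterise the integer solution set as $\boldsymbol{x}_0 + M\boldsymbol{y}$ with $\boldsymbol{y} \in \mathbb{Z}^{n-1}$, reducing the problem to a lower-dimensional ALIP instance.

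If $\mathcal{P}$ is full-dimensional, I split on boundedness. In the bounded case, Fourier-Motzkin yields explicit coordinate bounds of the form $|x_i| \leq L_i/\pi$ with $L_i$ a real linear form in logarithms, so I enumerate the finitely many integer candidates and test each using Baker's theorem. In the unbounded case, I invoke Kronecker's theorem (Theorem~\ref{Kronecker}): compute algebraic generators $\boldsymbol{v}_1,\ldots,\boldsymbol{v}_r$ of the recession cone $C = \{\boldsymbol{v} : A\boldsymbol{v} \leq \boldsymbol{0}\}$ via Minkowski-Weyl; pick a rational interior point $\boldsymbol{q}$ with ball $B(\boldsymbol{q},\rho) \subseteq \mathcal{P}^\circ$ (via Theorem~\ref{thm:fme}); and observe that $B(\boldsymbol{q} + \sum_i n_i \boldsymbol{v}_i, \rho) \subseteq \mathcal{P}^\circ$ for every $\boldsymbol{n} \in \mathbb{N}^r$ since $C$ is the recession cone. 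Compute $G = \bigcap_i A(\boldsymbol{v}_i)$ using Masser's theorem. If $G \subseteq A(\boldsymbol{q})$, Kronecker supplies $\boldsymbol{n}$ with $\boldsymbol{q} + \sum_i n_i \boldsymbol{v}_i$ within distance $\rho$ of some $\boldsymbol{m} \in \mathbb{Z}^n$, and that $\boldsymbol{m}$ lies in $\mathcal{P}$.

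The main obstacle is the case when Kronecker's condition fails, i.e.\ some integer $\boldsymbol{z}$ satisfies $\boldsymbol{z} \cdot \boldsymbol{v}_i \in \mathbb{Z}$ for all $i$ but $\boldsymbol{z} \cdot \boldsymbol{q} \notin \mathbb{Z}$. The plan is to use $\boldsymbol{z}$ as the source of a lower-dimensional reduction, by slicing $\mathcal{P}$ along the integer level sets of $\boldsymbol{z} \cdot \boldsymbol{x}$ and recursing on each slice. This is immediate when $\boldsymbol{z} \cdot \boldsymbol{v}_i = 0$ for all $i$, since then $\boldsymbol{z} \cdot \boldsymbol{x}$ is bounded on $\mathcal{P}$ (via the Minkowski-Weyl decomposition $\mathcal{P} = Q + C$ with $Q$ a polytope), so only finitely many integer slices can occur; the genuinely delicate case is when some $\boldsymbol{z} \cdot \boldsymbol{v}_i$ is a nonzero integer, which requires carefully combining the relation $\boldsymbol{z}$ with the recession structure of $\mathcal{P}$, and is where I expect the bulk of the technical work to lie.
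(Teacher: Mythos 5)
Your proposal follows essentially the same route as the paper: reduce to full-dimensional open polyhedra by eliminating implied equalities, handle bounded polyhedra by enumeration, and in the unbounded case pit Kronecker's theorem against a dimension-reducing slicing step. The overall decomposition, the Fourier--Motzkin usage, and the invocation of the recession cone via Minkowski--Weyl all match the paper. (One small correction of attribution: the group $\bigcap_i A(\boldsymbol{v}_i)$ of additive relations is computed with the affine-lattice proposition of Section~2.2, not Masser's theorem, which concerns \emph{multiplicative} relations; Masser is used earlier, together with Baker, to decide which entries of $\boldsymbol b/\pi$ are algebraic.)

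The one genuine gap is exactly the case you flag as ``delicate'': some nonzero $\boldsymbol{z}\in\bigcap_i A(\boldsymbol{v}_i)$ with $\boldsymbol{z}\cdot\boldsymbol{v}_{i_0}$ a nonzero integer, in which case $\boldsymbol z^T\boldsymbol x$ is unbounded on $\mathcal{P}$ and slicing yields infinitely many subinstances. The resolution is not to combine $\boldsymbol z$ with the recession structure, but to \emph{preempt} the case by normalising the ray generators. Since extremal rays are determined only up to positive scaling, rescale each $\boldsymbol v_i$ by a fixed positive algebraic $\alpha$ lying outside the number field $K$ generated by all coordinates of all the $\boldsymbol v_i$. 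Then for any $\boldsymbol z \in \mathbb{Z}^d$, having $\boldsymbol z \cdot (\alpha\boldsymbol v_i)\in\mathbb{Z}\setminus\{0\}$ would force $\alpha=m/(\boldsymbol z\cdot\boldsymbol v_i)\in K$, a contradiction; hence $A(\alpha\boldsymbol v_i)=\boldsymbol v_i^\perp\cap\mathbb{Z}^d$ for each $i$, and $\bigcap_iA(\alpha\boldsymbol v_i)=\mathcal{C}^\perp\cap\mathbb{Z}^d$. After this normalisation, every nonzero $\boldsymbol z$ entering Kronecker's criterion is automatically orthogonal to $\mathcal{C}$, so $\boldsymbol z^T\boldsymbol x$ is bounded on $\mathcal{P}=\mathcal{H}+\mathcal{C}$ and your ``immediate'' slicing subcase is the only one that can occur. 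The paper's proof relies on exactly this dichotomy, writing $\mathcal{L}=\mathcal{C}^\perp\cap\mathbb{Z}^d=\bigcap_iA(\boldsymbol c_i)$; the second equality holds precisely under such a normalisation of the $\boldsymbol c_i$, and making that choice explicit is what closes your gap.
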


\begin{proof}
We are given a convex polyhedron $\mathcal{P}=\lbrace \boldsymbol{x} \in \mathbb{R}^{d} : \pi A \boldsymbol{x} \leq \boldsymbol{b} \rbrace$, where the coordinates $\boldsymbol{b}$ are linear forms in logarithms of algebraic numbers, and need to decide whether this polyhedron intersects $\mathbb{Z}^{d}$. Throughout this proof, $\log$ denotes the logarithm branch picked at the beginning of the proof of Theorem \ref{reference-for-log}. We start by eliminating linear dependencies between the logarithms appearing therein, using Masser's Theorem. For example, suppose that
\begin{align*}
b_{i} = r_{0} + r_{1} \log(s_{1}) + \cdots + r_{k} \log(s_{k}), r_{0}, r_{1}, s_{1}, \ldots, r_{k}, s_{k} \in \overline{\mathbb{Q}}.
\end{align*}
Due to Baker's theorem, there exists a non-trivial linear relation with algebraic coefficients amongs $\log(-1), \log(s_{1}), \ldots, \log(s_{k})$ if and only if there is one with integer coefficients. But such relations can be computed, since
\begin{align*}
&n_{0} \log(-1) + n_{1} \log(s_{1}) + \cdots + n_{k} \log(s_{k}) = 0 \Leftrightarrow \\
&(-1)^{n_{0}} s_{1}^{n_{1}} \cdots s_{k}^{n_{k}} = 1
\end{align*}
and since the group of multiplicative relations $L(-1, s_{1}, \ldots, s_{k})$ can be effectively computed. Whenever it contains a non-zero vector, we use it to eliminate an unnecessary $\log(s_{i})$ term, although never eliminating $\log(-1)$. When this process is over, we can see whether each term $b_{i}/\pi$ is algebraic or transcendental: it is algebraic if $b_{i} = \alpha \log(-1), \alpha \in \overline{\mathbb{Q}}$, and transcendental otherwise.

Now, when $\boldsymbol{x} \in \mathbb{Z}^{d}$, $A \boldsymbol{x}$ is a vector with algebraic coefficients, so whenever $b_{i} / \pi$ is transcendental we may alter $\mathcal{P}$ by replacing $\leq$ by $<$ in the $i$-th inequality, preserving its intersection with $\mathbb{Z}^{d}$. On the other hand, whenever $b_{i} / \pi$ is algebraic, we split our problem into two: in the first one, $\mathcal{P}$ is altered to force equality on the $i$-th constraint (that is, replacing $\leq$ by $=$), and in the second we force strict inequality (that is, replacing $\leq$ by $<$). We do this for all $i$, so that no $\leq$ is left in any problem instance, leaving us with finitely many polyhedra, each defined by equations of the form
\begin{align*}
K \boldsymbol{x} &= \boldsymbol{k} \quad &(\boldsymbol{k} \in \overline{\mathbb{Q}}^{d_{1}}) \\
M \boldsymbol{x} &< \boldsymbol{m} \quad &(\boldsymbol{m} \in \overline{\mathbb{Q}}^{d_{2}}) \\
F \boldsymbol{x} &< \boldsymbol{f} \quad &(\boldsymbol{f} \in \mathbb{R} \setminus \overline{\mathbb{Q}}^{d_{3}})
\end{align*}
where $K,M,F$ are matrices with algebraic entries. Before proceeding, we eliminate all such empty polyhedra; note that emptiness can be decided via Fourier-Motzkin elimination, as shown in Theorem \ref{thm:fme}.

The idea of the next step is to reduce the dimension of all the problem instances at hand until we are left with a number of new instances with full-dimensional open convex polyhedra, of the same form as the original one, apart from the fact that all inequalities in their definitions will be strict. To do that, we use the equations $K \boldsymbol{x} = \boldsymbol{k}$ to eliminate variables: note that, whenever there is an integer solution,
\begin{align*}
K \boldsymbol{x} = \boldsymbol{k}, \boldsymbol{x} \in \mathbb{Z}^{d} \Leftrightarrow \boldsymbol{x} = \boldsymbol{x}_{0} + M \boldsymbol{z},
\end{align*}
where $M$ is a matrix with integer entries, $\boldsymbol{x}_{0}$ is an integer vector and $\boldsymbol{z}$ ranges over integer vectors over a smaller dimension space, wherein we also define the polyhedron
\begin{align*}
\mathcal{Q} = \lbrace \boldsymbol{y} : \boldsymbol{x}_{0} + M \boldsymbol{y} \in \mathcal{P} \rbrace .
\end{align*}

Having now eliminated all equality constraints, we are left with a finite set of polyhedra of the form $\mathcal{P} = \lbrace \boldsymbol{x} \in \mathbb{R}^{d}: \pi A \boldsymbol{x} < \boldsymbol{b} \rbrace$ that are either empty or full-dimensional and open, and wish to decide whether they intersect the integer lattice of the corresponding space (different instances may lie in spaces of different dimensions, of course). Note that, when $\mathcal{P}$ is non-empty, we can use Fourier-Motzkin elimination to find a vector $\boldsymbol{q} \in \mathbb{Q}^{d}$ in its interior, and $\varepsilon > 0$ such that the $l_{1}$ closed ball of radius $\varepsilon$ and centre $\boldsymbol{q}$, which we call $\mathcal{B}$, is contained in $\mathcal{P}$.

The next step is to consider the Minkowski-Weyl decomposition of $\mathcal{P}$, namely $\mathcal{P} = \mathcal{H} + \mathcal{C}$, where $\mathcal{H}$ is the convex hull of finitely many points of $\mathcal{P}$ (which we need not compute) and $\mathcal{C} = \lbrace \boldsymbol{x} \in \mathbb{R}^{d}: A \boldsymbol{x} \leq \boldsymbol{0} \rbrace$ is a cone with an algebraic description. Note that $\mathcal{P}$ is bounded if and only if $\mathcal{C} = \lbrace \boldsymbol{0} \rbrace$, in which case the problem at hand is simple: consider the polyhedron $\mathcal{Q}$ with an algebraic description obtained by rounding up each coordinate of $\boldsymbol{b} / \pi$, which has the same conic part as $\mathcal{P}$ and which contains $\mathcal{P}$, and therefore is bounded; finally, compute a bound on $\mathcal{Q}$ (such a bound can be defined in the first-order theory of reals), which is also a bound on $\mathcal{P}$, and test the integer points within that bound for membership in $\mathcal{P}$. Otherwise,
\begin{align*}
\mathcal{C} = \lbrace \lambda_{1} \boldsymbol{c}_{1} + \cdots + \lambda_{k} \boldsymbol{c}_{k}: \lambda_{1}, \ldots, \lambda_{k} \geq 0 \rbrace,
\end{align*}
where $\boldsymbol{c}_{1}, \ldots, \boldsymbol{c}_{k} \in \overline{\mathbb{Q}}^{d}$ are the extremal rays of $\mathcal{C}$. Note that $\boldsymbol{q} + \mathcal{C} \subseteq \mathcal{P}$ and that $\mathcal{B} + \mathcal{C} \subseteq \mathcal{P}$.

Now we consider a variation of an argument which appears in \cite{KP}. Consider the computable set
\begin{align*}
\mathcal{L} = \mathcal{C}^{\perp} \cap \mathbb{Z}^{d} = \bigcap\limits_{i=1}^{k} A(\boldsymbol{c}_{i}) ,
\end{align*}
where $A(\boldsymbol{v})$ denotes the group of additive relations of $\boldsymbol{v}$.

If $\mathcal{L} = \lbrace \boldsymbol{0} \rbrace$ then due to Kronecker's theorem on simultaneous Diophantine approximation it must be the case that there exists a vector $(n_{1}, \ldots, n_{k}) \in \mathbb{N}^{k}$ such that
\begin{align*}
\operatorname{dist} \left(\boldsymbol{q} + \sum\limits_{i=1}^{k} n_{i} \boldsymbol{c}_{i}, \mathbb{Z}^{d} \right) \leq \varepsilon ,
\end{align*}
and we know that $\mathcal{P} \cap \mathbb{Z}^{d} \neq \emptyset$ from the fact that the $l_{1}$ closed ball $\mathcal{B}$ of radius $\varepsilon$ and centre $\boldsymbol{q}$ is contained in $\mathcal{P}$.

On the other hand, if
$\mathcal{L} \neq \lbrace \boldsymbol{0} \rbrace$, let
$\boldsymbol{z} \in \mathcal{L} \setminus \lbrace \boldsymbol{0}
\rbrace$.
Since $\mathcal{H}$ is a bounded subset of $\mathbb{R}^n$, the set
\[ \{ \boldsymbol{z}^T\boldsymbol{x} : \boldsymbol{x}\in \mathcal{P} \} =
   \{ \boldsymbol{z}^T\boldsymbol{x} : \boldsymbol{x}\in \mathcal{H} \} \]
is a bounded subset of $\mathbb{R}$.
Therefore there exist
$a,b \in \mathbb{Z}$ such that
\begin{align*}
\forall \boldsymbol{x} \in \mathcal{P}, a \leq \boldsymbol{z}^{T} \boldsymbol{x} \leq b ,
\end{align*}
so we can reduce our problem to $b-a+1$ smaller-dimensional instances
by finding the integer points of
$\lbrace \boldsymbol{x} \in \mathcal{P} :
\boldsymbol{z}^{T} \boldsymbol{x} = i \rbrace$,
for $i \in \lbrace a, \ldots, b \rbrace$. Note that we have seen
earlier in the proof how to reduce the dimension of the ambient space
when the polyhedron $\mathcal{P}$ is contained in an affine
hyperplane.
\end{proof}

\section{Undecidability of the Non-Commutative Case}

In this section we show that the Matrix-Exponential Problem is
undecidable in the case of non-commuting matrices.  We show
undecidability for the most fundamental variant of the problem, as
given in Definition~\ref{def:MEP}, in which the matrices have real
entries and the variables $t_i$ range over the non-negative reals.
Recall that this problem is decidable in the commutative case by the
results of the previous section.

\subsection{Matrix-Exponential Problem with Constraints}

The proof of undecidability in the non-commutative case is by
reduction from Hilbert's Tenth Problem.  The reduction proceeds via
several intermediate problems.  These problems are obtained by
augmenting MEP with various classes of arithmetic constraints on the
real variables that appear in the statement of the problem.

\begin{definition}
  We consider the following three classes of arithmetic constraints
  over real variables $t_1,t_2,\ldots$:
\begin{itemize}
\item $\mathcal{E}_{\pi\mathbb{Z}}$ comprises constraints of the form
  $t_i\in\alpha+\beta\pi\mathbb{Z}$, where $\alpha$ and $\beta\neq 0$
  are real-valued constants such that $\cos(2\alpha\beta^{-1})$,
  $\beta$ are both algebraic numbers.
\item $\mathcal{E}_{+}$ comprises linear equations of the form
  $\alpha_1 t_1 + \ldots + \alpha_n t_n = \alpha_0 $, for
  $\alpha_0,\ldots,\alpha_n$ real algebraic constants.
\item $\mathcal{E}_{\times}$ comprises equations of the form
  $t_\ell=t_it_j$.
\end{itemize}
\end{definition}

A class of constraints $\mathcal{E} \subseteq \mathcal{E}_{\pi\mathbb{Z}} \cup
\mathcal{E}_{+}\cup \mathcal{E}_{\times}$ 
induces a generalisation of the MEP problem as follows:
\begin{definition}[MEP with Constraints]
  Given a class of constraints
  $\mathcal{E} \subseteq \mathcal{E}_{\pi\mathbb{Z}} \cup
  \mathcal{E}_{+}\cup \mathcal{E}_{\times}$,
  the problem MEP$(\mathcal{E})$ is as follows.  An instance consists
  of real algebraic matrices $A_1,\ldots,A_k,C$ and a finite set of
  constraints $E\subseteq\mathcal{E}$ on real variables
  $t_1,\ldots,t_k$.  The question is whether there exist non-negative
  real values for $t_1,\ldots,t_k$ such that
  $\prod_{i=1}^ke^{A_it_i}=C$ and the constraints $E$ are all
  satisfied.
\label{def:contraintMEP}
\end{definition}

Note that in the above definition of MEP$(\mathcal{E})$ the set of
constraints $E$ only mentions real variables $t_1,\ldots,t_k$
appearing in the matrix equation $\prod_{i=1}^ke^{A_it_i}=C$.
However, without loss of generality, we can allow constraints to
mention fresh variables $t_i$, for $i>k$, since we can always define a
corresponding matrix $A_i=0$ for such variables for then
$e^{A_it_i}=I$ has no effect on the matrix product.  In other words,
we effectively have constraints in $\mathcal{E}$ with existentially
quantified variables.  In particular, we have the following
useful observations:

\begin{itemize}
\item[\textbullet] We can express inequality constraints of the form
  $t_i\neq \alpha$ in  $\mathcal{E}_{+}\cup \mathcal{E}_{\times}$ by
using fresh variables
  $t_j,t_\ell$.  Indeed $t_i \neq \alpha$ is satisfied whenever there
  exist values of $t_j$ and $t_{\ell}$ such that $t_i=t_j+\alpha$ and
  $t_jt_\ell=1$.

\item[\textbullet] By using fresh variables,
  $\mathcal{E}_{+}\cup \mathcal{E}_{\times}$ can express polynomial
  constraints of the form $P(t_1,\ldots,t_n)=t$ for $P$ a polynomial
  with integer coefficients.
\end{itemize}

We illustrate the above two observations in an example.
\begin{example}
  Consider the problem, given matrices $A_1,A_2$ and $C$, to decide
  whether there exist $t_1,t_2 \geq 0$ such that
  \[ e^{A_1t_1}e^{A_2t_2}=C \,\mbox{ and }\, t_1^2-1=t_2, t_2\neq 0 \, .\]
  This is equivalent to the following instance of
  MEP$(\mathcal{E}_{+}\cup\mathcal{E}_{\times})$: decide whether there exist $t_1,\ldots,t_5\geq 0$ such that
\[ \prod_{i=1}^5 e^{A_it_i}=C \,\mbox{ and }\, t_1t_1=t_3, t_3-1=t_2, t_2t_4=t_5, t_5=1\]
where $A_1,A_2$ and $C$ are as above and $A_3=A_4=A_5=0$.
% where
% \begin{equation*}
% A_1=\begin{bmatrix}1&2\\0&1\end{bmatrix}, A_2=\begin{bmatrix}0&3\\2&1\end{bmatrix},
% A_3=\begin{bmatrix}-1&\sqrt{2}\\7&\tfrac{3}{42}\end{bmatrix},C=\begin{bmatrix}4&3\\2&1\end{bmatrix}.
% \end{equation*}
\end{example}

% The following observations will prove useful in the following:
% \begin{itemize}
% \item[\textbullet] Since $\mathcal{E}_{+}$ contains equations of the
%   form $t_i=t_j$ and $t_i=\alpha$ and so we can use constraints to fix
%   the value of a variable to be a real algebraic constant and we can
%   ensure that several variables have the same value.

% \end{itemize}

We will make heavy use of the following proposition to combine several
instances of the constrained MEP into a single instance by combining
matrices block-wise.
\begin{proposition}\label{prop:remark}
  Given real algebraic matrices $A_1,\ldots,A_k,C$ and
  $A_1',\ldots,A_k',C'$, there exist real algebraic matrices
  $A_1'',\ldots,A_k''$, $C''$ such that for all $t_1,\ldots,t_k$:
\[\prod_{i=1}^ke^{A_i''t_i}=C''\qquad\Leftrightarrow\qquad\prod_{i=1}^ke^{A_it_i}=C\wedge\prod_{i=1}^ke^{A_i't_i}=C'.\]
\label{prop:combine}
\end{proposition}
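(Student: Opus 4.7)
The plan is to use a straightforward block-diagonal construction. Given the two systems with matrices $A_1,\ldots,A_k,C \in \mathbb{R}^{n\times n}$ and $A_1',\ldots,A_k',C' \in \mathbb{R}^{n'\times n'}$ (both over real algebraic entries), I define for each $i$ the $(n+n')\times(n+n')$ real algebraic block-diagonal matrix
\begin{align*}
A_i'' = \begin{pmatrix} A_i & 0 \\ 0 & A_i' \end{pmatrix}, \qquad C'' = \begin{pmatrix} C & 0 \\ 0 & C' \end{pmatrix}.
\end{align*}

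The key observation is that the matrix exponential respects block-diagonal structure: from the power series definition, $(A_i'')^m$ is block-diagonal with blocks $A_i^m$ and $(A_i')^m$, so summing termwise yields
\begin{align*}
\exp(A_i'' t_i) = \begin{pmatrix} \exp(A_i t_i) & 0 \\ 0 & \exp(A_i' t_i) \end{pmatrix}.
\end{align*}
Since the set of block-diagonal matrices with a fixed block partition is closed under multiplication and the product acts block-wise, an easy induction on $k$ gives
\begin{align*}
\prod_{i=1}^{k} \exp(A_i'' t_i) = \begin{pmatrix} \prod_{i=1}^{k}\exp(A_i t_i) & 0 \\ 0 & \prod_{i=1}^{k}\exp(A_i' t_i) \end{pmatrix}.
\end{align*}

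Finally I compare this product with $C''$: two block-diagonal matrices (with matching block sizes) are equal if and only if their corresponding blocks are equal, which yields the desired equivalence. There is no real obstacle here; the only thing to verify is that the construction produces real algebraic matrices, which is immediate since the entries of $A_i''$ and $C''$ are just the entries of $A_i, A_i', C, C'$ together with zeros.
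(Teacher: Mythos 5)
Your proof is correct and uses exactly the same block-diagonal construction as the paper, with the same justification that the matrix exponential respects block structure via its power series. No meaningful differences.
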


\begin{proof}
Define for any $i\in\{1,\ldots,k\}$:
\[A_i''=\begin{bmatrix}A_i&0\\0&A_i'\end{bmatrix},\qquad
C''=\begin{bmatrix}C&0\\0&C'\end{bmatrix}.\]
The result follows because the matrix exponential can be computed
block-wise (as is clear from its power series definition):
\[\prod_{i=1}^ke^{A_i''t_i}=\prod_{i=1}^k\begin{bmatrix}e^{A_it_i}&0\\0&e^{A_i't_i}\end{bmatrix}=
\begin{bmatrix}\prod_{i=1}^ke^{A_it_i}&0\\0&\prod_{i=1}^ke^{A_i't_i}\end{bmatrix}.\]
\end{proof}

We remark that in the statement of Proposition~\ref{prop:remark} the
two matrix equations that are combined are over the same set of
variables.  However, we can clearly combine any two matrix equations
for which the common variables appear in the same order in the
respective products.

The core of the reduction is to show that the constraints in
$\mathcal{E}_{\pi\mathbb{Z}},\mathcal{E}_{+}$ and
$\mathcal{E}_{\times}$ do not make the MEP problem harder: one can
always encode them using the matrix product equation.

\begin{proposition}
  MEP$(\mathcal{E}_{\pi\mathbb{Z}} \cup \mathcal{E}_{+} \cup
  \mathcal{E}_{\times})$ reduces to MEP$(\mathcal{E}_{+} \cup
  \mathcal{E}_{\times})$.
\label{lem:pi}
\end{proposition}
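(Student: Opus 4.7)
The plan is to encode each constraint of type $\mathcal{E}_{\pi\mathbb{Z}}$ as an extra matrix-exponential equation over the same variables $t_1,\ldots,t_k$, and then fuse all such equations block-diagonally into the original one via Proposition~\ref{prop:remark}. Constraints from $\mathcal{E}_{+}\cup\mathcal{E}_{\times}$ are kept as they are, so the output is an instance of MEP$(\mathcal{E}_{+}\cup\mathcal{E}_{\times})$.

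Fix a single constraint $t_i\in\alpha+\beta\pi\mathbb{Z}$, with $\cos(2\alpha/\beta)$ and $\beta$ algebraic. Let $R=\begin{pmatrix}0&-1\\1&0\end{pmatrix}$, so that
\[
\exp(R\theta)=\begin{pmatrix}\cos\theta & -\sin\theta\\ \sin\theta & \cos\theta\end{pmatrix},
\]
which as a function of $\theta\in\mathbb{R}$ has exact period $2\pi$. Set $B=\tfrac{2}{\beta}R$ (real algebraic, since $\beta\in\overline{\mathbb{Q}}$) and
\[
C^{*}=\exp\!\left(\tfrac{2\alpha}{\beta}R\right)=\begin{pmatrix}\cos(2\alpha/\beta) & -\sin(2\alpha/\beta)\\ \sin(2\alpha/\beta) & \cos(2\alpha/\beta)\end{pmatrix}.
\]
The hypothesis gives $\cos(2\alpha/\beta)\in\overline{\mathbb{Q}}$, and the identity $\sin^{2}+\cos^{2}=1$ forces $\sin(2\alpha/\beta)\in\overline{\mathbb{Q}}$ as well, so $C^{*}$ has real algebraic entries. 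By the $2\pi$-periodicity above,
\[
\exp(Bt_i)=C^{*}\iff \tfrac{2t_i}{\beta}\equiv\tfrac{2\alpha}{\beta}\pmod{2\pi}\iff t_i\in\alpha+\beta\pi\mathbb{Z}.
\]
To turn this into a MEP equation over the full variable list, pad with zero matrices: define $B_j=0$ for $j\neq i$ and $B_i=B$. Since $\exp(0\cdot t_j)=I$, one has $\prod_{j=1}^{k}\exp(B_jt_j)=\exp(Bt_i)$, so the matrix equation $\prod_{j}\exp(B_jt_j)=C^{*}$ is equivalent to the original $\pi\mathbb{Z}$-constraint.

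Doing this for each of the (finitely many) constraints in the input that lie in $\mathcal{E}_{\pi\mathbb{Z}}$ produces a family of auxiliary matrix equations, all using $t_1,\ldots,t_k$ in the same order as the given equation $\prod_i\exp(A_it_i)=C$. Applying Proposition~\ref{prop:remark} iteratively (or once, with multi-block diagonals), we combine them all into a single real-algebraic matrix equation of enlarged dimension whose real non-negative solutions are exactly the tuples $(t_1,\ldots,t_k)$ that solve the original matrix equation \emph{and} satisfy every $\mathcal{E}_{\pi\mathbb{Z}}$-constraint. Retaining the $\mathcal{E}_{+}\cup\mathcal{E}_{\times}$-constraints verbatim yields the desired MEP$(\mathcal{E}_{+}\cup\mathcal{E}_{\times})$ instance, and the entire reduction is effective.

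The one delicate point is effectivity of the entries of $C^{*}$: one must actually produce $\sin(2\alpha/\beta)$ as an algebraic number, i.e.\ determine the correct sign of $\pm\sqrt{1-\cos^{2}(2\alpha/\beta)}$. This is a routine matter given the representation of the constraint (any reasonable encoding of $(\alpha,\beta)$ used to specify $t_i\in\alpha+\beta\pi\mathbb{Z}$ makes the sign computable, since otherwise the constraint itself would be ambiguous). Everything else---the periodicity computation, the padding with zero matrices, and the block-diagonal combination---is entirely mechanical.
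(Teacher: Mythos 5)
Your proof is correct and follows essentially the same route as the paper: encode each $t_i\in\alpha+\beta\pi\mathbb{Z}$ constraint via the $2\pi$-periodicity of the rotation $\exp(R\theta)$, using $B=\tfrac{2}{\beta}R$ and the algebraic rotation matrix $C^{*}$, then fuse the resulting matrix equation into the original one block-diagonally via Proposition~\ref{prop:remark}. The only addition is your explicit remark about the algebraicity and sign of $\sin(2\alpha\beta^{-1})$, a point the paper asserts without comment; your caveat is apt but does not change the argument.
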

\begin{proof}
  Let $A_1,\ldots,A_k,C$ be real algebraic matrices and
  $E\subseteq \mathcal{E}_{\pi\mathbb{Z}} \cup \mathcal{E}_{+} \cup
  \mathcal{E}_{\times}$
  a finite set of constraints on $t_1,\ldots,t_k$.  Since $E$ is
  finite it suffices to show how to eliminate from $E$ each constraint
  in $\mathcal{E}_{\pi \mathbb{Z}}$.

  Let $t_j\in\alpha+\beta\pi\mathbb{Z}$ be a constraint in $E$.  By definition 
of  $\mathcal{E}_{\pi\mathbb{Z}}$ 
we have
  that $\cos(2\alpha\beta^{-1}),\sin(2\alpha\beta^{-1})$ and
  $\beta\neq0$ are real algebraic.  Now define the following extra
  matrices:
\[A'_j=\begin{bmatrix}0&2\beta^{-1}\\-2\beta^{-1}&0\end{bmatrix},
 C'=\begin{bmatrix}\cos(2\alpha\beta^{-1})&
\sin(2\alpha\beta^{-1})\\-\sin(2\alpha\beta^{-1})&\cos(2\alpha\beta^{-1})\end{bmatrix}.\]
Our assumptions ensure that $A_j'$ and $C'$ are both real algebraic.

We now have the following chain of equivalences:
\begin{align*}
e^{A'_j t_j}=C'
&\Leftrightarrow\begin{bmatrix}\cos(2t_j\beta^{-1})&
\sin(2t_j\beta^{-1})\\-\sin(2t_j\beta^{-1})&\cos(2t_j\beta^{-1})\end{bmatrix}
=C'\\
&\Leftrightarrow\cos(2t_j\beta^{-1})=\cos(2\alpha\beta^{-1})\\
&\qquad\wedge\sin(2t_j\beta^{-1})=\sin(2\alpha\beta^{-1})\\
&\Leftrightarrow2\beta^{-1}t_j=2\alpha\beta^{-1}\mod 2\pi\\
&\Leftrightarrow t_j\in\alpha+\beta\pi\mathbb{Z}.
\end{align*}
Thus the additional matrix equation $e^{A_j't_j}=C'$ is equivalent to
the constraint $t_j\in\alpha+\beta\pi\mathbb{Z}$.  Applying
Proposition~\ref{prop:combine} we can thus eliminate this constraint.
\end{proof}

\begin{proposition}
  MEP$(\mathcal{E}_{+}\cup\mathcal{E}_{\times})$ reduces to
  MEP$(\mathcal{E}_{+})$.
\label{lem:times}
\end{proposition}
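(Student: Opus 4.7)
The plan is to eliminate multiplicative constraints one by one, encoding each $t_\ell=t_it_j$ as an auxiliary matrix equation involving only nilpotent upper-triangular matrices and a handful of additive constraints on fresh variables, then combining the auxiliary equation with the main matrix product via the block-diagonal construction of Proposition~\ref{prop:combine}. Since $E$ is finite and each step preserves equivalence, iterating this yields an instance of MEP$(\mathcal{E}_{+})$ with the same solution set.

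For the core encoding, for each constraint $t_\ell=t_it_j$ introduce fresh non-negative variables $s_1,s_2,s_3,s_4,s_5$ and the additive constraints $s_1=t_i$, $s_2=t_j$, $s_5=t_\ell$ (all in $\mathcal{E}_{+}$). Let $E_{pq}\in\mathbb{Z}^{3\times3}$ denote the matrix with a single $1$ at position $(p,q)$. These matrices are nilpotent with $\exp(\lambda E_{pq})=I+\lambda E_{pq}$, and among the three matrices $E_{12},E_{23},E_{13}$ the only non-vanishing product is $E_{12}E_{23}=E_{13}$. Using these identities, a direct expansion gives
\begin{align*}
&\exp(E_{12}s_1)\exp(E_{23}s_2)\exp(-E_{12}s_3)\exp(-E_{23}s_4)\exp(-E_{13}s_5)\\
&\qquad =\begin{pmatrix}1 & s_1-s_3 & s_1s_2-(s_1-s_3)s_4-s_5\\ 0 & 1 & s_2-s_4\\ 0 & 0 & 1\end{pmatrix}.
\end{align*}
Requiring this product to equal $I$ forces $s_1=s_3$, $s_2=s_4$, and $s_5=s_1s_2$; after substituting the additive constraints $s_1=t_i$, $s_2=t_j$, $s_5=t_\ell$, this is precisely $t_\ell=t_it_j$.

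To assemble the full reduction, process each multiplicative constraint in $E$ as above to obtain an auxiliary $3\times3$ instance (five fresh variables, matrices $E_{12},E_{23},-E_{12},-E_{23},-E_{13}$, target $I$), and then combine all these auxiliary instances and the original instance block-diagonally using Proposition~\ref{prop:combine}: in the combined product, matrices from one block act as the identity in the other blocks, so the combined equation is equivalent to the conjunction of the original matrix equation and the auxiliary equations. The resulting instance uses only constraints from $\mathcal{E}_{+}$, matrices with integer (hence real algebraic) entries, and non-negative variables, so it is a valid instance of MEP$(\mathcal{E}_{+})$.

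The main obstacle is discovering the Heisenberg-style encoding of multiplication in the second step: we need a non-commutative product of matrix exponentials whose $(1,3)$ entry is the desired product $s_1s_2$ while the extraneous $(1,2)$ and $(2,3)$ entries can be cancelled by fresh variables tied back to $s_1$ and $s_2$. Once this encoding is identified, checked by the explicit matrix computation above, the block-diagonal assembly is routine given Proposition~\ref{prop:combine}, and handling the non-negativity of the variables causes no difficulty since all the quantities being represented ($t_i$, $t_j$, $t_\ell$, $t_it_j$) are non-negative.
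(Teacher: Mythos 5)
Your proposal is correct and takes essentially the same approach as the paper: both encode the multiplicative constraint $t_\ell = t_i t_j$ via a Heisenberg-style product of five $3\times 3$ unipotent matrix exponentials over fresh variables tied to $t_i,t_j,t_\ell$ by $\mathcal{E}_{+}$ constraints, set that product equal to the identity so that the off-diagonal entries force the two auxiliary ``primed'' variables to match and the $(1,3)$ entry recovers the product, and then assemble everything with the original instance block-diagonally via Proposition~\ref{prop:combine}. The only differences from the paper are cosmetic: variable names, a slightly different ordering of the five factors, and the resulting sign of the $(1,3)$ entry, all of which yield the same constraint upon equating to $I$.
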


\begin{proof}
  Let $A_1,\ldots,A_k,C$ be real algebraic matrices and
  $E \subseteq\mathcal{E}_{+}\cup\mathcal{E}_{\times}$ a finite set of
  constraints on variables $t_1,\ldots,t_k$.  We proceed as above,
  showing how to remove from $E$ each constraint from
  $\mathcal{E}_{\times}$.  In so doing we potentially increase the
  number of matrices and add new constraints from $\mathcal{E}_{+}$.

  Let $t_l=t_i t_j$ be an equation in $E$.  To eliminate this equation
the first step is to introduce fresh  
variables $x,x',y,y',z$ and add the constraints
\[ t_i=x,\, t_j=y,\, t_\ell = z,\]
which are all in $\mathcal{E}_{+}$.  We now add a new matrix equation
over the fresh variables $x,x',y,y',z$ that is equivalent to the
constraint $xy=z$.  Since this matrix equation involves a new set of
variables we are free to the set the order of the matrix products,
which is crucial to express the desired constraint.

The key gadget is the following matrix product equation,
  which holds for  any $x,x',y,y',z\geqslant0$:
\begin{align*}
&\begin{bmatrix}1&0&-z\\0&1&0\\0&0&1\end{bmatrix}\begin{bmatrix}1&0&0\\0&1&-y'\\0&0&1\end{bmatrix}
\begin{bmatrix}1&x&0\\0&1&0\\0&0&1\end{bmatrix}\\
&\qquad\times\begin{bmatrix}1&0&0\\0&1&y\\0&0&1\end{bmatrix}
\begin{bmatrix}1&-x'&0\\0&1&0\\0&0&1\end{bmatrix}=
\begin{bmatrix}1&x-x'&z-xy\\0&1&y-y'\\0&0&1\end{bmatrix}.
\end{align*}

Notice that each of the matrices on the left-hand side of the above
equation has a single non-zero off-diagonal entry.  Crucially each
matrix of this form can be expressed as an exponential.  Indeed we can
write the above equation as a matrix-exponential product
\[ e^{B_1z} e^{B_2y'} e^{B_3x} e^{B_4y} e^{B_5x'} = \begin{bmatrix}1&x-x'&z-xy\\0&1&y-y'\\0&0&1\end{bmatrix} \]
for matrices
\[\begin{array}{r@{}cp{.1cm}r@{}c}
B_1=&\begin{bmatrix}0&0&-1\\0&0&0\\0&0&0\end{bmatrix} &&
B_2=&\begin{bmatrix}0&0&0\\0&0&-1\\0&0&0\end{bmatrix} \\
B_3=&\begin{bmatrix}0&1&0\\0&0&0\\0&0&0\end{bmatrix} &&
B_4=&\begin{bmatrix}0&0&0\\0&0&1\\0&0&0\end{bmatrix} \\
B_5=&\begin{bmatrix}0&-1&0\\0&0&0\\0&0&0\end{bmatrix} &&
\end{array}\]
Thus the constraint $xy=z$ can be expressed as 
\begin{gather}
e^{B_1z} e^{B_2y'} e^{B_3x} e^{B_4y} e^{B_5z'} = I \, .
\label{eq:fresh}
\end{gather}

Again, we can apply Proposition~\ref{prop:combine} to combine the
equation (\ref{eq:fresh}) with the matrix equation from the original
problem instance and thus encode the constraint $x=yz$.

\end{proof}

\begin{proposition}
  MEP$(\mathcal{E}_{+})$ reduces to MEP.
\label{lem:plus}
\end{proposition}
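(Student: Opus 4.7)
The plan is to encode each linear constraint in $\mathcal{E}_+$ as a companion matrix-exponential equation, and then absorb it into the original MEP instance via Proposition~\ref{prop:combine}. Since the set $E$ of constraints is finite, it is enough to handle a single constraint $\alpha_1 t_1 + \cdots + \alpha_k t_k = \alpha_0$, where without loss of generality the coefficient vector ranges over all $k$ variables of the given MEP (set coefficients to $0$ for variables not actually occurring).

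The key observation is that the $2\times 2$ nilpotent matrices
\[
A'_i = \begin{pmatrix} 0 & \alpha_i \\ 0 & 0 \end{pmatrix}
\]
all satisfy $A'_iA'_j = 0$, so they pairwise commute and each $A'_i$ squares to zero. Consequently $e^{A'_it_i} = I + A'_it_i$, and the product distributes as
\[
\prod_{i=1}^k e^{A'_i t_i} \;=\; e^{\sum_{i=1}^k A'_it_i} \;=\; I + \sum_{i=1}^k A'_it_i \;=\; \begin{pmatrix} 1 & \sum_{i=1}^k \alpha_i t_i \\ 0 & 1 \end{pmatrix}.
\]
Taking $C' = \begin{pmatrix} 1 & \alpha_0 \\ 0 & 1 \end{pmatrix}$, which is real algebraic since $\alpha_0$ is, we obtain
\[
\textstyle\prod_{i=1}^k e^{A'_it_i} = C' \quad\Longleftrightarrow\quad \sum_{i=1}^k \alpha_i t_i = \alpha_0,
\]
for all real $t_1,\ldots,t_k$ (no non-negativity assumption is even needed for this equivalence, although we do assume $t_i \geq 0$ throughout).

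Having thus converted a single linear constraint into an additional MEP equation over the same variables in the same order as the original, I apply Proposition~\ref{prop:combine} to fuse it block-wise with the matrices $A_1,\ldots,A_k,C$ of the input. Iterating this once per constraint in $E$ yields, after finitely many steps, an equivalent pure MEP instance $\prod_{i=1}^k e^{\tilde A_i t_i} = \tilde C$ with real algebraic entries, completing the reduction. There is no real obstacle here: commutativity of the encoding matrices is immediate from $A'_iA'_j = 0$, which is exactly what allows the product to collapse into the desired linear form, so the construction is genuinely routine once one sees the $2\times 2$ gadget.
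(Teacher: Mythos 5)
Your proof is correct and is essentially identical to the paper's: both use the $2\times 2$ nilpotent gadget $A'_i=\begin{pmatrix}0&\alpha_i\\0&0\end{pmatrix}$ with target $C'=\begin{pmatrix}1&\alpha_0\\0&1\end{pmatrix}$ (the paper writes the constant as $-\beta$) and then fold the resulting companion equation into the original instance via Proposition~\ref{prop:combine}. The only cosmetic difference is that you make the nilpotency argument $A'_iA'_j=0$ explicit, while the paper just evaluates the product directly.
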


\begin{proof}
  Let $A_1,\ldots,A_k,C$ be real algebraic matrices and
  $E \subseteq\mathcal{E}_{+}$ a set of constraints.
  We proceed as above, showing how to eliminate each constraint from
  $E$ that lies in $\mathcal{E}_{+}$, while preserving the set of
  solutions of the instance.

Let $\beta+\sum_{i=1}^k\alpha_it_i=0$ be an equation in $E$.
Recall that $\beta,\alpha_1,\ldots,\alpha_k$ are real algebraic.
Define the extra matrices $A_1',\ldots,A_k'$ and $C'$ as follows:
\[A_i'=\begin{bmatrix}0&\alpha_i\\0&0\end{bmatrix},
\qquad C'=\begin{bmatrix}1&-\beta\\0&1\end{bmatrix}.\]
Our assumptions ensure that $A_1',\ldots,A_k'$ and $C'$ are all real algebraic.
Furthermore, the following extra product equation becomes:
\begin{align*}
\prod_{i=1}^ke^{A_i't_i}=C
&\Leftrightarrow\prod_{i=1}^k\begin{bmatrix}1&\alpha_it_i\\0&1\end{bmatrix}=\begin{bmatrix}1&-\beta\\0&1\end{bmatrix}\\
&\Leftrightarrow\sum_{i=1}^k\alpha_it_i=-\beta \, .
\end{align*}
\end{proof}

Combining Propositions~\ref{lem:pi}, \ref{lem:times}, and \ref{lem:plus} we
have:
\begin{proposition}
  MEP$(\mathcal{E}_{\pi\mathbb{Z}}\cup\mathcal{E}_{+}\cup\mathcal{E}_{\times})$
  reduces to MEP.
\end{proposition}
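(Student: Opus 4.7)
The plan is simply to chain the three reductions in the correct order, since the statement is essentially a corollary of Propositions~\ref{lem:pi}, \ref{lem:times}, and \ref{lem:plus}. Starting from an instance of MEP$(\mathcal{E}_{\pi\mathbb{Z}}\cup\mathcal{E}_{+}\cup\mathcal{E}_{\times})$, I would first apply Proposition~\ref{lem:pi} to obtain an equivalent instance of MEP$(\mathcal{E}_{+}\cup\mathcal{E}_{\times})$, then apply Proposition~\ref{lem:times} to reach an equivalent instance of MEP$(\mathcal{E}_{+})$, and finally apply Proposition~\ref{lem:plus} to arrive at a pure MEP instance.

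The only point that requires a sanity check is that the order of elimination is consistent, i.e.\ that no reduction reintroduces a class of constraints that has already been eliminated. Inspecting the proofs: Proposition~\ref{lem:pi} removes each $\mathcal{E}_{\pi\mathbb{Z}}$ constraint by folding in an extra matrix block via Proposition~\ref{prop:combine}, without adding any new symbolic constraints, so $\mathcal{E}_{+}$ and $\mathcal{E}_{\times}$ constraints are left untouched. Proposition~\ref{lem:times} replaces each equation $t_\ell = t_i t_j$ by fresh variables together with equations of the form $t_i = x$, $t_j = y$, $t_\ell = z$ (which lie in $\mathcal{E}_{+}$) and a matrix gadget; in particular it introduces neither $\mathcal{E}_{\pi\mathbb{Z}}$ nor $\mathcal{E}_{\times}$ constraints. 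Proposition~\ref{lem:plus} removes the remaining $\mathcal{E}_{+}$ constraints purely by augmenting the matrix equation, adding no new constraints at all.

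Hence the composition is well-typed and produces, in finitely many steps, an instance of plain MEP equivalent to the original instance. Each individual step preserves the set of satisfying assignments restricted to the original variables (the fresh variables are existentially quantified), and matrix blocks from successive applications of Proposition~\ref{prop:combine} are simply concatenated along the diagonal to form a single block-diagonal matrix equation. The main ``obstacle'', such as it is, is purely bookkeeping: making explicit that the three reductions compose in the stated order. I do not foresee any genuine technical difficulty beyond what is already handled in the preceding propositions.
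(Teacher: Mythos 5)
Your proof is correct and matches the paper's approach exactly: the paper states the proposition as an immediate corollary obtained by composing Propositions~\ref{lem:pi}, \ref{lem:times}, and \ref{lem:plus} in that order. Your additional check that no reduction reintroduces an already-eliminated constraint class is the right sanity check, and it indeed goes through as you describe.
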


\subsection{Reduction from Hilbert's Tenth Problem}

\begin{theorem}
  MEP is undecidable in the non-commutative case.
\end{theorem}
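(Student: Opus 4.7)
The plan is to reduce Hilbert's Tenth Problem to MEP by passing through the constrained problem MEP$(\mathcal{E}_{\pi\mathbb{Z}}\cup\mathcal{E}_+\cup\mathcal{E}_\times)$. The chain of reductions just established in Propositions~\ref{lem:pi}, \ref{lem:times}, and \ref{lem:plus} reduces the constrained problem to plain MEP, so it suffices to reduce Hilbert's Tenth Problem to MEP$(\mathcal{E}_{\pi\mathbb{Z}}\cup\mathcal{E}_+\cup\mathcal{E}_\times)$; undecidability of MEP then follows from Matiyasevich's theorem.

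Given an instance $p(x_1,\ldots,x_k)\in\mathbb{Z}[x_1,\ldots,x_k]$ of Hilbert's Tenth Problem, first I would introduce variables $t_1,\ldots,t_k$ and impose the $\mathcal{E}_{\pi\mathbb{Z}}$-constraint $t_i\in\pi\mathbb{Z}$ (valid with $\alpha=0,\beta=1$, since $\cos 0=1$ is algebraic). Combined with the built-in MEP requirement $t_i\geq 0$, this forces each $t_i$ into $\pi\mathbb{N}$, so that $t_i=\pi n_i$ for some $n_i\in\mathbb{N}$. The matrix equation itself can be taken trivial (for instance $A_i=0$ and $C=I$), with all expressive content residing in the constraints.

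The core of the reduction is to encode the polynomial equation $p(n_1,\ldots,n_k)=0$ using constraints in $\mathcal{E}_+$ and $\mathcal{E}_\times$. By the observations recalled just after Definition~\ref{def:contraintMEP}, any polynomial identity with integer coefficients can be broken into a system of atomic $\mathcal{E}_+$ and $\mathcal{E}_\times$ constraints through the introduction of fresh intermediate variables. The subtlety is that directly substituting $n_i=t_i/\pi$ yields coefficients involving powers of $\pi$, which are transcendental and hence not admissible in $\mathcal{E}_+$. To sidestep this, my plan is to homogenise: introduce a fresh variable $t_0\in\pi\mathbb{Z}$ with $t_0\neq 0$ (the inequality enforced via the $\neq$-gadget recalled after Definition~\ref{def:contraintMEP}), and encode the degree-$d$ homogenisation $\tilde{p}(t_0,t_1,\ldots,t_k)=0$, whose coefficients are integers; evaluating at $t_i=\pi n_i$ and using the homogeneity of $\tilde{p}$ yields $\pi^d\tilde{p}(n_0,\ldots,n_k)=0$, equivalent to $\tilde{p}(n_0,\ldots,n_k)=0$.

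The principal obstacle I foresee is that $\tilde{p}$ having a natural-number zero with $n_0\geq 1$ only captures non-negative \emph{rational} zeros $(n_1/n_0,\ldots,n_k/n_0)$ of $p$---an instance of Hilbert's Tenth over $\mathbb{Q}$, which is open, rather than the undecidable variant over $\mathbb{N}$. Closing this gap calls for an integrality gadget: for each $i$ introduce an auxiliary $s_i$ with $t_i=t_0 s_i$ (via $\mathcal{E}_\times$) together with supplementary $\mathcal{E}_{\pi\mathbb{Z}}$ and $\mathcal{E}_\times$ constraints compelling each $s_i$ to take a natural-number value, equivalently forcing $n_0\mid n_i$; the resulting equation $\tilde{p}(n_0,n_0 s_1,\ldots,n_0 s_k)=n_0^d p(s_1,\ldots,s_k)=0$ then yields a genuine integer zero of $p$. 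Designing this gadget cleanly within $\mathcal{E}_{\pi\mathbb{Z}}\cup\mathcal{E}_+\cup\mathcal{E}_\times$, given that $\mathcal{E}_{\pi\mathbb{Z}}$ only supplies $\pi$-shifted arithmetic progressions, is where I would expect to invest the most care.
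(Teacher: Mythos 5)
Your overall strategy---reduce through MEP$(\mathcal{E}_{\pi\mathbb{Z}}\cup\mathcal{E}_+\cup\mathcal{E}_\times)$, use $\mathcal{E}_{\pi\mathbb{Z}}$ to force variables into $\pi\mathbb{N}$, and homogenise the polynomial to clear the resulting powers of $\pi$---is exactly the paper's. You also correctly identify the obstruction: a non-negative integer zero of the homogenisation $\tilde p$ with $n_0\geq 1$ only witnesses a non-negative \emph{rational} zero of $p$, which is not what Matiyasevich's theorem is about.

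However, the integrality gadget you sketch to close this gap does not go through with the available constraint classes, and this is a genuine hole. You propose to force each $s_i=n_i/n_0$ to be a natural number via ``supplementary $\mathcal{E}_{\pi\mathbb{Z}}$ and $\mathcal{E}_\times$ constraints.'' But $\mathcal{E}_{\pi\mathbb{Z}}$ can only impose membership in progressions $\alpha+\beta\pi\mathbb{Z}$ with $\beta$ algebraic; constraining a raw variable to lie in $\mathbb{Z}$ would require $\beta\pi\mathbb{Z}=\mathbb{Z}$, i.e.\ $\beta=1/\pi$, which is transcendental and hence inadmissible. Any attempt to repair this by first multiplying $s_i$ by $\pi$ presupposes access to a variable pinned to the exact value $\pi$---which is precisely the ingredient you are missing.

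The paper closes the gap much more economically: rather than allowing $n_0\geq 1$ and then trying to force divisibility, it pins the homogenisation variable $\lambda$ to be \emph{exactly} $\pi$. This is done by combining three constraints: $\lambda\in\pi\mathbb{Z}$ (in $\mathcal{E}_{\pi\mathbb{Z}}$), $\lambda\geq 0$ (built into MEP), and $1\leq\lambda\leq 4$ (encoded in $\mathcal{E}_+\cup\mathcal{E}_\times$ via the polynomial equations $\lambda=1+\alpha^2$ and $\lambda=4-\beta^2$ for fresh $\alpha,\beta$). Since $\pi$ is the unique non-negative multiple of $\pi$ in $[1,4]$, this forces $\lambda=\pi$, hence $n_0=1$, and then homogeneity gives $0=Q(\pi n_1,\ldots,\pi n_k,\pi)=\pi^d Q(n_1,\ldots,n_k,1)=\pi^d P(n_1,\ldots,n_k)$, i.e.\ a genuine natural-number zero of $P$. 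This interval trick---using a bounded window to single out one representative of $\pi\mathbb{Z}$---is the key idea absent from your proposal, and once you have it the divisibility machinery becomes unnecessary.
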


\begin{proof}
  We have seen in the previous section that the problem
  MEP$(\mathcal{E}_{\pi\mathbb{Z}}\cup\mathcal{E}_{+}\cup\mathcal{E}_{\times})$
  reduces to MEP without constraints.  Thus it suffices to reduce
  Hilbert's Tenth Problem to
  MEP$(\mathcal{E}_{\pi\mathbb{Z}}\cup\mathcal{E}_{+}\cup\mathcal{E}_{\times})$.
  In fact the matrix equation will not play a role in the
  target of this reduction, only the additional constraints.

  Let $P$ be a polynomial of total degree $d$ in $k$ variables with
  integer coefficients. From $P$ we build a homogeneous polynomial $Q$, by
  adding a new variable $\lambda$:
  \[Q(\mathbf{x},\lambda)=\lambda^dP\left(\frac{x_1}{\lambda},\ldots,
    \frac{x_k}{\lambda}\right).\]
  Note that $Q$ still has integer coefficients. Furthermore, we have the relationship
\[Q(\mathbf{x},1)=P(\mathbf{x}).\]

As we have seen previously, it is easy to encode $Q$ with constraints,
in the sense that we can compute a finite set of constraints
$E_Q \subseteq \mathcal{E}_{+}\cup\mathcal{E}_{\times}$ mentioning variables 
$t_0,\ldots,t_m,\lambda$
such that $E$ is
satisfied if and only if $t_0=Q(t_1,\ldots,t_k,\lambda)$. Note that
$E_Q$ may need to mention variables other
than $t_1,\ldots,t_k$ to do that.  Another
finite set of equations
$E_{\pi\mathbb{Z}}\subseteq\mathcal{E}_{\pi\mathbb{Z}}$ is used to
encode that $t_1,\ldots,t_k,\lambda\in\pi\mathbb{Z}$. Finally,
$E_{=}\subseteq\mathcal{E}_{+}\cup\mathcal{E}_{\times}$ is used to
encode $t_{0}=0$ and $1\leqslant \lambda\leqslant4$.  The
latter is done by adding the polynomial equations $\lambda=1+\alpha^2$ and
$\lambda=4-\beta^2$ for some $\alpha$ and $\beta$. Finally we have the
following chain of equivalences:
\begin{align*}
&\exists t_0,\ldots,\lambda\geqslant0\text{ s.t. }E_Q\cup E_{\pi\mathbb{Z}}\cup E_{=}\text{ is satisfied }\\
&\qquad\Leftrightarrow\exists t_1,\ldots,\lambda\geqslant0\text{ s.t. } 0=Q(t_1,\ldots,t_k,\lambda)\\
    &\qquad\qquad\wedge t_1,\ldots,t_k,\lambda\in\pi\mathbb{Z}\wedge 1\leqslant\lambda\leqslant4\\
&\qquad\Leftrightarrow\exists n_1,\ldots,n_k\in\mathbb{N}\text{ s.t. } 0=Q(\pi n_1,\ldots,\pi n_k,\pi)\\
&\qquad\Leftrightarrow\exists n_1,\ldots,n_k\in\mathbb{N}\text{ s.t. } 0=\pi^d Q(n_1,\ldots,n_k,1)\\
&\qquad\Leftrightarrow\exists n_1,\ldots,n_k\in\mathbb{N}\text{ s.t. } 0=P(n_1,\ldots,n_k).\\
\end{align*}
\end{proof}
\section{Conclusion}

We have shown that the Matrix-Exponential Problem is undecidable in
general, but decidable when the matrices $A_{1}, \ldots, A_{k}$ commute.
This is analogous to what was known for the discrete version
of this problem, in which the matrix exponentials $e^{At}$ are
replaced by matrix powers $A^n$.

A natural variant of this problem is the following:
\begin{definition}[Matrix-Exponential Semigroup Problem]
  Given square matrices $A_{1}, \ldots, A_{k}$ and $C$, all of the
  same dimension and all with real algebraic entries, is $C$ a member
  of the matrix semigroup generated by
\begin{align*}
\lbrace \exp(A_{i} t_{i}) : t_{i} \geq 0 , i=1,\ldots,k \rbrace ?
\end{align*}
\end{definition}
When the matrices $A_1,\ldots,A_k$ all commute, the above problem is
equivalent to the Matrix-Exponential Problem, and therefore decidable. In the non-commutative case, the following result holds:
\begin{theorem}
The Matrix-Exponential Semigroup Problem is undecidable.
\end{theorem}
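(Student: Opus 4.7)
The plan is to reduce the (non-commutative) Matrix-Exponential Problem, established as undecidable in the previous section, to the Matrix-Exponential Semigroup Problem. Given an MEP instance $(A_1, \ldots, A_k, C)$, I would construct an MESP instance $(A_1', \ldots, A_k', C')$ with $A_i' = A_i \oplus B_i$ and $C' = C \oplus T^*$, where the auxiliary blocks $B_i$ and the auxiliary target $T^*$ force any semigroup product equal to $C'$ to correspond, in the main block, to a canonical MEP product $\exp(A_1 t_1) \cdots \exp(A_k t_k)$.

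The auxiliary structure would combine two mechanisms. First, to force the sequence of indices $i_1, i_2, \ldots$ appearing in the semigroup product to occur in a canonical order, I would use nilpotent elementary matrices of the form $E_{i, i+1}$ acting in a $(k+1)$-dimensional sub-block: the expansion of $\prod_l (I + s_l E_{i_l, i_l+1})$ produces cross terms $E_{j, k}$ with $k > j+1$ whenever the sequence of indices contains an ascending consecutive pair $(j, j+1)$, and requiring the corresponding part of $T^*$ to have no such cross terms eliminates all non-canonical sequences, leaving only sequences of the form $k^{a_k}(k-1)^{a_{k-1}}\cdots 1^{a_1}$. Second, a counter sub-block would track the total times $S_i = \sum_{l : i_l = i} s_l$ using mutually annihilating nilpotents $E_{k+1, i}$, which exponentiate commutatively and yield a counter equal to $I + \sum_i S_i E_{k+1, i}$. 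Once both constraints are satisfied, the main block of the semigroup product simplifies, using that each $A_i$ commutes with itself, to $\exp(A_k S_k) \cdots \exp(A_1 S_1)$, recovering (up to reversal of indices, which can be absorbed in the construction) a canonical MEP product.

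The main obstacle is that directly fixing the values of the $S_i$'s through $T^*$ would reduce MESP only to MEP with \emph{fixed} times, which is trivially decidable. To circumvent this, I would reduce not from plain MEP but from MEP$(\mathcal{E}_{+}\cup\mathcal{E}_{\times})$, shown undecidable by Propositions~\ref{lem:times} and~\ref{lem:plus}, absorbing the linear and multiplicative constraints into the choice of auxiliary block and target. The linear constraints in $\mathcal{E}_+$ can be encoded by replacing the single-entry counters with a collection of diagonal matrices whose exponential product captures linear combinations of the $S_i$'s; the multiplicative constraints in $\mathcal{E}_\times$ are more delicate and would likely require adapting the five-matrix gadget of Proposition~\ref{lem:times} by adjoining further ordering sub-blocks, combining everything via Proposition~\ref{prop:combine}. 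The hardest step will be verifying that the semigroup's extra freedom (arbitrary reorderings and repetitions of generators) does not introduce spurious solutions in the multiplication gadget, which crucially relies on a specific ordering of its five constituent matrices; the uniqueness of strictly upper-triangular logarithms (Theorem~\ref{logarithm_uniqueness}) should play a role here in pinning down the behaviour of the ordering sub-block.
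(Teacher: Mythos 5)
The paper gives no proof of this theorem; it defers one to a journal version and offers only the one-line hint that it proceeds ``by reduction from the Matrix-Exponential Problem, using a set of gadgets to force a desired order in the multiplication of the matrix exponentials.'' Your proposal follows exactly that blueprint, so there is nothing in the paper to compare it against line by line; assessed on its own terms, however, it has genuine gaps.

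First, the ordering block built from $E_{i,i+1}$ forbids only \emph{consecutive} ascending pairs: a sequence such as $1,3$ that skips index $2$ produces no cross-term $E_{j,j+2}$ and slips through, yet is not of the form $k^{a_k}\cdots1^{a_1}$; this is repairable (use a $3\times3$ block for every pair $i<j$) but highlights the second and more serious problem. The $(j,j+1)$ entries of any such ordering block -- just like the $(k+1,i)$ entries of your counter -- equal the accumulated times $S_j$, so matching them against a fixed target $T^*$ pins down all the $S_j$'s, which is exactly the collapse to a fixed-time, trivially decidable instance you identify for the counter. Reducing from MEP$(\mathcal{E}_+\cup\mathcal{E}_\times)$ instead of plain MEP does nothing for the ordering block, which you still need, so the leak remains. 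Third, the multiplication gadget of Proposition~\ref{lem:times} is not merely ``delicate'' under the semigroup's freedom to reorder and split times -- it fails outright. In Heisenberg coordinates, a semigroup product of the five exponentials with accumulated times $S_3=S_5=x$, $S_4=S_2=y$, $S_1=z$ equals $I$ iff $z=\sum_{l<l'}a_lb_{l'}$, and by interleaving (e.g.~the order $B_3,B_5,B_4,B_2$ yields $\sum=0$, while the paper's order yields $xy$) the adversary can realise any value in $[0,xy]$, so the gadget enforces only $0\le z\le xy$, not $z=xy$. Thus an ordering constraint is indispensable inside the multiplication gadget, yet the only ordering gadget on offer is precisely the one that leaks the unknown times. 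You rightly flag this as the hardest step, but what is needed to close the circle -- an order-detector whose observable entries are insensitive to the accumulated $S_i$, or an order-insensitive encoding of $z=xy$ -- is not supplied, and without it the reduction does not go through.
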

A proof will appear in a future journal version of this paper. This can be done by reduction from the Matrix-Exponential Problem, using a set of gadgets to force a desired order in the multiplication of the matrix exponentials.

It would also be interesting to look at possibly decidable
restrictions of the MEP/MESP, for example the case where $k=2$ with a
non-commuting pair of matrices, which was shown to be decidable for
the discrete analogue of this problem in \cite{MEHTP}. Bounding the dimension of the ambient vector space could also yield decidability, which has been partly accomplished in the discrete case in \cite{CK05}. Finally, upper bounding the complexity of our decision procedure for the commutative case would also be a worthwhile task.

\acks

The author Jo\~{a}o Sousa-Pinto would like to thank Andrew Kaan Balin for a productive discussion during the early stages of this work.

\bibliographystyle{abbrvnat}
\softraggedright
\bibliography{refs}

\end{document}